\newtheorem{thm}{Theorem}
\newtheorem{lem}{Lemma}
\newtheorem{exmp}{Example}
\newtheorem{pro}{Problem}
\begin{document}

\title{Overall Evaluations on Benefits of Influence When Disturbed by Rivals}

\author{Jianxiong Guo,
	Yapu Zhang,
	Weili Wu,~\IEEEmembership{Senior Member,~IEEE}
	\thanks{J. Guo and W. Wu are with the Department
		of Computer Science, Erik Jonsson School of Engineering and Computer Science, Univerity of Texas at Dallas, Richardson, TX, USA; Y. Zhang is with the School of Mathematical Sciences, University of Chinese Academy of Sciences, Beijing, CHN.
		
		E-mail: jianxiong.guo@utdallas.edu}
	\thanks{Manuscript received April 19, 2005; revised August 26, 2015.}}

\markboth{Journal of \LaTeX\ Class Files,~Vol.~14, No.~8, August~2015}%
{Shell \MakeLowercase{\textit{et al.}}: Bare Demo of IEEEtran.cls for IEEE Journals}

\maketitle

\begin{abstract}
	Influence maximization (IM) is a representative and classic problem that has been studied extensively before. The most important application derived from the IM problem is viral marketing. Take us as a promoter, we want to get benefits from the influence diffusion in a given social network, where each influenced (activated) user is associated with a benefit. However, there is often competing information initiated by our rivals diffusing in the same social network at the same time. Consider such a scenario, a user is influenced by both my information and my rivals' information. Here, the benefit from this user should be weakened to certain degree. How to quantify the degree of weakening? Based on that, we propose an overall evaluations on benefits of influence (OEBI) problem. We prove the objective function of the OEBI problem is not monotone, not submodular, and not supermodular. Fortunately, we can decompose this objective function into the difference of two submodular functions and adopt a modular-modular procedure to approximate it with a data-dependent approximation guarantee. Because of the difficulty to compute the exact objective value, we design a group of unbiased estimators by exploiting the idea of reverse influence sampling, which can improve time efficiency significantly without losing its approximation ratio. Finally, numerical experiments on real datasets verified the effectiveness of our approaches regardless of performance and efficiency.
\end{abstract}

\begin{IEEEkeywords}
	Overall evaluations, Influence maximization, Submodularity, Modular-modular proceduce, Sampling techniques, Social networks, Approximation algorithm
\end{IEEEkeywords}

\IEEEpeerreviewmaketitle

\section{Introduction}
\IEEEPARstart{T}{he} online social media, such as Twitter, Facebook, Wechat, and LinkedIn, were booming prosperously in the recent decade and become a dominating method to contact with others and make friends \cite{zhang2017situational}. People are more inclined to share their comments about some hot issues at every moment in these platforms. By the end of December 2019, there are more than 3.725 billon users active in these social media. The relationships among the users on these social platforms can be denoted by social networks. A large number of messages can be shared rapidly over the networks. Subsequently, influence maximization (IM) \cite{kempe2003maximizing} was formulated to focus on a problem that selects a small subset of users (seed set) for an information cascade to maximize the expected follow-up adoptions (influence spread). It is a natural generalization for viral marketing. The IM problem was based on the two influence diffusion models, independent cascade model (IC-model) and linear threshold model (LT-model), and they can be summarized into the trigger model. Besides, they \cite{kempe2003maximizing} proved the expected influence spread is monotone and submodular, thereby a $(1-1/e)$-approximation can be obtained by the greedy algorithm implemented by the Monte-Carlo (MC) simulations.

Since this seminal work, it derives a series of optimization problems, such as profit maximization (PM) \cite{lu2012profit} \cite{dong2017dynamics} \cite{9089295}, competitive IM \cite{bharathi2007competitive} \cite{guo2019novel}, and rumor blocking \cite{tong2017efficient} \cite{guo2019multi}. Consider us as a promoter to initiate an information cascade, we aim to get benefits from the influence spread started from our selected seed set in a social network. If a user is activated during the influence diffusion, we can get a benefit associated with her. Suppose it exists cost needed to pay when selecting a seed set, the profit is defined by the total benefits of influence spread minus the cost of this seed set, where the PM problem aims to maximize the expected profit. However, this is only an idealized state, where there is no competitor diffusing its cascade simultaneously. Generally, more than one type of information can flood the same network. In the competitive IM problem, there are multiple information cascades diffusing their respective influence independently, where it assumes a user can only be activated by one cascade successfully. It aims to select a seed set to maximize our own expected influence spread or to minimize the influence spread from other competing cascades (rumor blocking).

Combining the PM and competitive IM problem together, it formulates the competitive PM problem that maximizes our own expected profit when there are multiple information cascades. However, this model has a crucial drawback because each user can only be activated by one cascade. Actually, for a user in a social network, she may be influenced by multiple cascades from different promoters. If a user is activated by our cascade but activated by rivals' cascades contemporarily, the benefit we can get from her will be weakened, even be negative. Let us consider the following example.

\begin{exmp}
	Take us as an Apple carrier, we want to popularize a new iPhone across a given network by influence diffusion. If a user is influenced by us, we can get a benefit from her according to her appraisal about our product. When there is a rival, such as Samsung, existing, it will promote its phone by diffusing the influence as well. If a user is influenced by both Samsung and us, its appraisal about our product is very likely to be reduced after comparing it with Samsung. The benefit associated with her will be reduced even to be negative.
\end{exmp}

Based on this realistic scenario, we propose an overall evaluations on benefit of influence (OEBI) problem, where we define how to quantify and maximize the benefits of influence because of the rival's disturbance. We show that the OEBI problem is NP-hard and its objective function is not monotone, not submodular, and not supermodular. Because there is no direct approach to approximate it with a theoretical bound, we decompose this objective function into the difference of two monotone and submodular functions. Then, we adopt a modular-modular procedure \cite{iyer2012algorithms} that replaces the first submodular function with one of its lower bound and the second submodular function with one of its upper bound. Then, a data-dependent approximation ratio can be obtained by this procedure. Moreover, it is \#P-hard to compute the exact objective value under the IC-model \cite{chen2010scalable} and LT-model \cite{chen2010scal}. Even though we can estimate our objective value by use of MC simulations, the terrible time inefficiency is unavoidable, which restricts its scalability to larger networks. Based on the idea of reverse influence sampling (RIS) \cite{borgs2014maximizing}, we design a group of unbiased estimators to estimate the value of our objective function. If the number of samplings is large enough, its estimation error is neglectable. Next, we take this estimator as the input of modular-modular procedure, which reduces the running time greatly while maintaining the approximation guarantee. Finally, we conduct several experiments to evaluate the superiority of our proposed method to other heuristic algorithms, where they support the effectiveness and efficiency of our method strongly.

\textbf{Organization:} Sec. \uppercase\expandafter{\romannumeral2} surveys the-state-of-art works. Sec. \uppercase\expandafter{\romannumeral3} is dedicated to introduce diffusion model, background, and define the OEBI problem formally. The monotonicity, submodularity, and computability are presented in Sec. \uppercase\expandafter{\romannumeral4}. Sec. \uppercase\expandafter{\romannumeral5} is the main contributions, including algorithm design, sampling techniques, and approximation guarantee. Numerical experiments and performance analysis are presented in Sec. \uppercase\expandafter{\romannumeral7} and \uppercase\expandafter{\romannumeral8} is the conclusion for this paper. 

\section{Related Works}
\textbf{Influence Maximization:} Kempe \textit{et al.} \cite{kempe2003maximizing} came up with the IC-model and LT-model, formulated IM problem as a monotone submodular maximization problem, and gave a greedy algorithm that achieves $(1-1/e-\varepsilon)$-approximation implemented by MC simulations. Chen \textit{et al.} proved it is \#P-hard to compute the expected influence spread given a seed set under the IC-model \cite{chen2010scalable} and LT-model \cite{chen2010scal}. Besides, they devised two efficient heuristic algorithms to solve the IM problem and evaluate their scalability. Contemporarily, a series of heuristic algorithms emerged, such as cost-effective lazy forward strategy \cite{leskovec2007cost} and degree discount heuristics \cite{chen2009efficient}. Brogs \textit{et al.} \cite{borgs2014maximizing} made a breakthrough. They proposed the concept of RIS to estimate the expected influence spread, which is scalable in practice and has a theoretical bound at the same time. Then, a series of researchers designed more efficient algorithms that achieve $(1-1/e-\varepsilon)$-approximation based on the RIS. Tang \textit{et al.} \cite{tang2014influence} \cite{tang2015influence} proposed TIM/TIM+ algorithms first and then develop a more efficient IMM based on the martingale analysis. Besides, it was improved further by SSA/DSSA \cite{nguyen2016stop} and OPIM \cite{tang2018online}.

\textbf{Competitive IM and Profit Maximization:}
Bharathi \textit{et al.} \cite{bharathi2007competitive} studied the competitive IM first and generalized it as a game of influence diffusion with multiple competing cascade. Lu \textit{et al.} \cite{lu2015competition} created a comparative IC-model that includes all settings of influence propagation from competition to complementarity. Tong \textit{et al.} \cite{tong2019multi} proposed an independent multi-cascade model and studied a multi-cascade IM problem under this model systematically, where they designed efficient algorithm and obtained a data-dependent approximation guarantee. In the classic PM problem \cite{lu2012profit} \cite{tang2016profit}, they usually considered the cost of a seed set is modular with respect the seed node in this seed set, which implies the profit function is still submodular but not monotone. It can be generalized as the unconstrained submodular maximization problem, which can be addressed by the double greedy algorithm within $(1/3)$-approximation and randomized double greedy algorithm within $(1/2)$-approximation \cite{buchbinder2015tight}. Tong \textit{et al.} \cite{tong2018coupon} considered the coupon allocation in the PM problem, and designed efficient randomized algorithms to achieve $(1/2-\varepsilon)$-approximation with high probability. Guo \textit{et al.} \cite{8952599} proposed a budgeted coupon problem whose domain is constrained and provided a continuous double greedy algorithm with a valid approximation. However, in our model, the formulation of competitiveness and definition of benefit are different from one of the above works.

\textbf{Non-submodular Maximization:} However, many realistic problems derived from the IM do not satisfy the submodularity. For a monotone non-submodular function, we can use the supermodular degree \cite{feige2013welfare} and curvature \cite{wang2016approximation} to analyze the approximation of greedy algorithm to maximize it. Then, Lu \textit{et al.} \cite{lu2015competition} devised a sandwich approximation framework, which can obtain a data-dependent approximation ratio by maximizing its submodular upper and submodular lower bounds, the return the solution that can maximize the original objective function as the final result. However, our objective function of the OEBI problem is not monotone. For a non-monotone non-submodular function, it can be decomposed into the difference of two submodular functions \cite{narasimhan2005submodular}, which can be approximated effectively by the submodular-supermodular procedure \cite{narasimhan2005submodular} and modular-modular procedure \cite{iyer2012algorithms}. In this paper, we design an efficient randomized algorithm to solve our OEBI problem with a satisfactory approximation guarantee based on the RIS and modular-modular procedure.

\begin{figure*}[!t]
	\centering
	\subfigure[The initial state]{
		\includegraphics[width=0.24\linewidth]{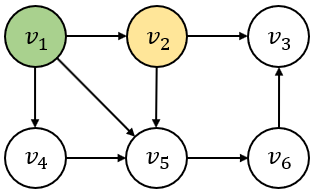}
	}%
	\subfigure[A realizating $g\sim\Omega^p$]{
		\includegraphics[width=0.24\linewidth]{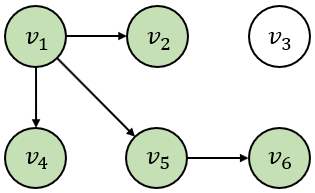}
	}%
	\subfigure[A realization $g'\sim\Omega^r$]{
		\includegraphics[width=0.24\linewidth]{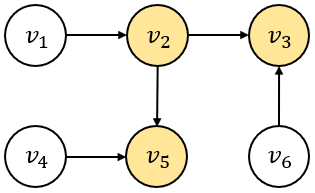}
	}%
	\subfigure[The final state]{
		\includegraphics[width=0.24\linewidth]{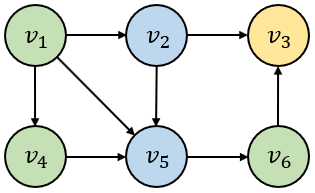}
	}%
	\centering
	\caption{This is an example to demonstrate the diffusion precess caused by a positive cascade and a negative cascade, where the green nodes, yellow nodes, and blue nodes are activated by the positive cascade, rival cascade, and both positve and rival cascades.}
	\label{fig1}
\end{figure*}

\section{Problem Formulation}
In this section, we introduce the diffusion model first and then formulate the OEBI problem.

\subsection{Diffusion Model and Realization}
	Let $G=(V,E)$ be a directed graph that represents a social network where where $V=\{v_1,v_2,\cdots,v_n\}$ is the set of $n$ users, $E=\{e_1,e_2,\cdots,e_m\}$ is the set of $m$ directed edges. For each directed edge $(u,v)\in E$, it models their friendship where $u$ (resp. $v$) is an incoming neighbor (resp. outgoing neighbor) of $v$ (resp. $u$). Moreover, the set of incoming neigbhbors (resp. outgoing neighbors) of node $u\in V$ is denoted by $N^-(v)$ (resp. $N^+(v)$).

	Given a seed set $S\subseteq V$, the influence diffusion model is a discrete-time stochastic process started from the seed nodes in $S$. In the beginning, all nodes in the seed set $S$ are active, but the other nodes are inactive. At time step $t_i$, we denote by $S_i$ the current active node set. Thereby we have $S_0:=S$ at $t_0$. Under the IC-model \cite{kempe2003maximizing}, there is a diffusion probability $p_{uv}\in (0,1]$ associated with each edge $(u,v)\in E$. At time step $t_i$ for $i\geq 1$, we have $S_i:=S_{i-1}$ first; then, each new activated node $u\in(S_{t-1}\backslash S_{t-2})$ in the last time step has one chance to activate its each inactive outgoing neighbor $v$ with the probability $p_{uv}$. We add $v$ into $S_i$ if $u$ activates $v$ successfully. The influence diffusion stops when no node can be activated further. The problems we will discuss in the subsequent sections are defaulted on the IC-model, but they can be extended to other influence models easily.
	
	Here, a specific IC-model based on graph $G$ can be defined as $\Omega=(G,P)$ where $P=\{p_{e_1},p_{e_2},\cdots,p_{e_m}\}$ is the set of $m$ edge probabilities. Given a specific IC-model $\Omega$, we define $g\sim \Omega$ as a realization sampled from $\Omega$, which is an instance of influence diffusion on this probabilistic graph. Under the IC-model, a realization is residual graph built by removing each edge $(u,v)\in E$ with probability $1-p_{uv}$. Thereby we have $\Pr[g]=\prod_{e\in E(g)}p_e\prod_{e\in E(G)\backslash E(g)}(1-p_e)$ and there is $2^m$ potential realizations in total.

	Given a seed set $S\subseteq V$ and a realization $g$, we denote by $I_g(S)$ the set of nodes that can be reachable from at least one node in this seed set. Thus, the expected number of active nodes over all potential realizations (expected influence spread) can be expressed as
	\begin{equation}
		\sigma_\Omega(S)=\mathbb{E}_{g\sim \Omega}\left[|I_g(S)|\right]=\sum_{g\in\mathcal{G}(\Omega)}\Pr[g]\cdot|I_g(S)|
	\end{equation}
	where $\mathcal{G}(\Omega)$ is the collection of all possible realizations sampled from $\Omega$. The IM problem is to select a seed set $S\subseteq V$ where $|S|\leq k$ such that the expected influence spread $\sigma(S)$ can be maximized. Given a set function $h:2^V\rightarrow\mathbb{R}$ and any two sets $S,T\subseteq V$, it is monotone if $h(S)\leq h(T)$ when $S\subseteq T\subseteq V$, submodular if $h(S\cup\{u\})-h(S)\geq h(T\cup\{u\})-h(T)$ when $S\subseteq T\subseteq V$ and $u\notin T$, and supermodular if $h(S\cup\{u\})-h(S)\leq h(T\cup\{u\})-h(T)$ when $S\subseteq T\subseteq V$ and $u\notin T$. Based on that, we have the expected influence spread $\sigma(\cdot)$ is monotone non-decreasing and submodular under the IC-model \cite{kempe2003maximizing}.
	
\subsection{Problem Definition}
	Consider a company, it wants to promote its new product by starting a cascade diffusing over the social network. Obviously, the expected influence spread is the benefit it can obtain. However, this is only in an ideal world because it does not consider whether there is the other cascade representing a competing product started by a rival company that diffuses over the social network at the same time. Thus, we can no longer evaluate this company's benefit only by the expected influence spread due to the rival's disturbance.
	
	Given a social network $G=(V,E)$, there are multiple cascades diffusing on this network simultaneously. A user is referred as $C$-active if she is activated by cascade $C$. Consider such a scenario, we define a positive cascade $C_p$ which represents the influence diffusion for the new product we want to promote over the network. It exists a rival cascade $C_r$ represents the influence diffusion for a competing product started by some rival company. Now, due to the existence of this competing cascade, our benefit from the influence spread of cascade $C_p$ will be disturbed and impaired to some extent. Given a rival seed set $S_r$, we need to find a positive seed set $S_p$ and start this positive cascade such that it can avoid the negative effects of the rival cascade started from $S_r$ as much as possible.
	
	Next, we discuss how to quantify the disturbance caused by the rival cascade to our benefit. Given a social network $G=(V,E)$, we consider a positive cascade $C_p$ diffuses under the IC-model $\Omega^p=(G,PP)$ and a rival cascade $C_r$ diffuses under the IC-model $\Omega^r=(G,PR)$, where $PP$ (resp. $PR$) is an edge probability distribution of $\Omega^p$ (resp. $\Omega^r$). These two cascades diffuse over the network $G$ respectively and independently. Then, we suppose each node $u\in V$ is associated with a benefit weight $p(u)\in\mathbb{R}_+$, which implies the benefit can be obtained from the fact that $u$ is $C_p$-active but not $C_r$-active. In other words, it is the earning from activating user $u$ by our positive cascade but not activating it by the rival cascade. Moreover, we suppose each node $u\in V$ is associated with a disturbed benefit weight $q(u)\in\mathbb{R}$ with $q(u)\leq p(u)$, which implies the earning can be obtained from the fact that $u$ is $C_p$-active and $C_r$-active. Here, the disturbed benefit weight describes the degree of disturbance caused by the rival cascade. For a user $u\in V$, her degree of disturbance caused by the rival cascade rests with its disturbed benefit weight $q(u)$. If $q(u)\in[0,p(u)]$, it means that the rival cascade will not cause a negative effect on this node $u$ even though it cuts down the benefit can be obtained from activating this node by positive cascade. If $q(u)\in(-\infty,0)$, it means that the rival cascade will cause a negative effect on this node. Thus, this $q$ controls the degree of disturbance caused by the rival cascade.
	
	Given a rival seed set $S_r\subseteq V$, the expected overall benefit from our positive seed set $S_p$ can be defined as
	\begin{flalign}
		f(S_p)&=\mathbb{E}_{g\sim\Omega^p}\mathbb{E}_{g'\sim\Omega^r}[f_{g,g'}(S_p)]\\
		&=\sum_{g\in\mathcal{G}(\Omega^p)}\Pr[g]\sum_{g'\in\mathcal{G}(\Omega^r)}\Pr[g']\cdot f_{g,g'}(S_p)
	\end{flalign}
	where $f(S_p)$ is the expectation over the realizations sampled from the IC-model $\Omega^p$ and $\Omega^n$. Given the two realizations $g\sim\Omega^p$ and $g'\sim\Omega^r$, the overall benefit of influence diffusion can be defined as
	\begin{equation}
		f_{g,g'}(S_p)=\sum_{u\in I_g(S_p)\backslash I_{g'}(S_r)}p(u)+\sum_{u\in I_g(S_p)\cap I_{g'}(S_r)}q(u)
	\end{equation}
	where the first term is the benefit from nodes activated only by $C_p$ and the second term is the disturbed benefit from nodes activated by both $C_p$ and $C_r$.
	
	Let us look at an example shown in Fig. \ref{fig1}. Shown as Fig. \ref{fig1}(a), the positive seed set is $S_p=\{v_1\}$ and the rival seed set $S_r=\{v_2\}$ in the beginning. Then, the influence spread started from $S_p$ is shown as Fig. \ref{fig1}(b), which is a realization sampled from its IC-model $\Omega^p$. Similarly, the influence spread started from $S_r$ is shown as Fig. \ref{fig1}(c), which is a realization sampled from its IC-model $\Omega^r$. From here, we can see that they diffuse respectively and independently. Finally, node $v_2$ and $v_5$ are activated by both the positive and rival cascades, thereby we have $I_g(S_p)\cap I_{g'}(S_r)=\{v_2,v_5\}$ shown as Fig. \ref{fig1}(d). Therefore, we have the overall benefit under this realization is $f_{g,g'}(S_p)=p(v_1)+p(v_4)+p(v_6)+q(v_2)+q(v_5)$. The overall evaluations on benefit of influence (OEBI) problem is
	\begin{pro}[OEBI]
		Given a social network $G=(V,E)$, a rival seed set $S_r$, and a budget $k$, the OEBI problem is aimed at finding a positive set set $S_p\subseteq V$, where $|S_p|\leq k$, such that its expected overall benefit $f(S_p)$ can be maximized, that is $S_p^*=\arg\max_{|S_p|\leq k}f(S_p)$.
	\end{pro}

\section{Further Discussions about OEBI}
	In this section, we analyze the properties of OEBI first and introduce how to decompose its objective function.
\subsection{The Properties}
	Given the rival seed set $S_r=\emptyset$, the OEBI problem can be reduced to the classical IM problem if we assume $p(u)=1$ for each $u\in V$. Thus, the OEBI problem is NP-hard through inheriting the NP-hardness of IM problem \cite{kempe2003maximizing} under the IC-model. Moreover, it is \#P-hard to compute the expected overall benefit because of the \#P-hardness to compute the expected influence spread under the IC-model \cite{chen2010scalable}. Next, we will analyze the monotonicity, submodularity, and supermodularity of the expected overall benefit function $f(S_p)$ with respect to $S_p$ step by step.
	\begin{thm}
		The objective function of the OEBI problem $f(S_p)$ is not monotone with respect to $S_p$.
	\end{thm}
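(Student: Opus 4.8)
The plan is to refute monotonicity by exhibiting a counterexample: two positive seed sets $S \subseteq T$ with $f(S) > f(T)$. The structural reason failure is possible is already visible in the definition of $f_{g,g'}$. The reachable set $I_g(S_p)$ only grows as $S_p$ grows, but every node lying in the intersection $I_g(S_p) \cap I_{g'}(S_r)$ is weighted by $q(u)$, and the problem statement explicitly permits $q(u) \in (-\infty,0)$. Hence enlarging $S_p$ can newly activate a node that is already rival-active and carries negative $q$-weight, producing a strictly negative marginal gain. First I would isolate a single such ``bad'' node to drive the construction.

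Concretely, I would choose a rival seed $w \in S_r$ with $q(w) < 0$ and arrange the instance so that $w$ is a sink, i.e.\ it has no outgoing edges, which forces $I_g(\{w\}) = \{w\}$ in every realization $g \sim \Omega^p$. Because seed nodes are always active in their own cascade, $w \in I_{g'}(S_r)$ for every realization $g' \sim \Omega^r$. Comparing $S = \emptyset$ against $T = \{w\}$, the computation is immediate: $I_g(\emptyset) = \emptyset$ gives $f_{g,g'}(\emptyset) = 0$ and hence $f(\emptyset) = 0$, while the sink property gives $I_g(\{w\}) = \{w\}$ with $w$ always in the intersection, so $f_{g,g'}(\{w\}) = q(w)$ and therefore $f(\{w\}) = q(w) < 0$. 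Thus $\emptyset \subseteq \{w\}$ yet $f(\emptyset) > f(\{w\})$, contradicting monotonicity.

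The only point requiring care is ensuring that enlarging $S_p$ does not smuggle in enough positive $p$-weight from downstream nodes to cancel $q(w)$; the sink choice removes this concern by making $I_g(\{w\}) = \{w\}$ realization-independent, so the strict inequality survives the expectation rather than merely holding for one realization. This is bookkeeping rather than an analytic obstacle, so I expect the main work to be presenting a clean, minimal instance (a single sink node, or a small graph containing one such node) and verifying the two expectations.
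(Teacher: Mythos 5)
Your proposal is correct and is essentially the paper's own argument: the paper takes the degenerate case of a single isolated node $v$ with $S_r=\{v\}$, compares $f(\emptyset)=0$ against $f(\{v\})=q(v)$, and notes that negative $q$ breaks monotonicity (under the paper's definition, monotone means non-decreasing, so your single inequality $f(\emptyset)>f(\{w\})$ suffices). Your sink-node formulation is just a mildly more general packaging of the same counterexample, with the same key observation that a rival seed with $q(w)<0$ contributes exactly $q(w)$ when added to $S_p$.
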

	\begin{proof}
	 	We consider the simplest case where the graph $G$ has only one node. Here, we have $V=\{v\}$ ands $E=\emptyset$. Given a rival seed set $S_r=\{v\}$, the expected overall benefit $f(\{v\})=q(u)$ and $f(\emptyset)=0$. Subsequently, we have $f(\{v\})-f(\emptyset)\geq 0$ if $q(u)\geq 0$; and $f(\{v\})-f(\emptyset)\leq 0$ if $q(u)\leq 0$. Thus, the monotonicity of $f(S_p)$ depends on the definition of dusturbed earning weights.
	\end{proof}

	\begin{thm}
		The objective function of the OEBI problem $f(S_p)$ is not submodular with respect to $S_p$ and not supermodular with respect to $S_p$.
	\end{thm}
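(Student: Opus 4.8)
The plan is to prove both non-properties with explicit small instances, guided by a closed form for $f$ that exposes it as a \emph{signed} weighted coverage. First I would push the two expectations inside the sum and exploit the fact that the positive and rival cascades are sampled \emph{independently}. Regrouping $f_{g,g'}(S_p)=\sum_{u\in I_g(S_p)}p(u)-\sum_{u\in I_g(S_p)\cap I_{g'}(S_r)}(p(u)-q(u))$ and writing $r(u)=\Pr_{g'\sim\Omega^r}[u\in I_{g'}(S_r)]$ for the fixed, $S_p$-independent rival reach probabilities, independence factors the joint event and yields
\[
f(S_p)=\sum_{u\in V} c(u)\,\Pr_{g\sim\Omega^p}[u\in I_g(S_p)],\qquad c(u)=p(u)-\bigl(p(u)-q(u)\bigr)\,r(u).
\]
For each fixed $u$ the map $S_p\mapsto\Pr_{g\sim\Omega^p}[u\in I_g(S_p)]$ is a monotone submodular coverage function, so $f$ is a linear combination of submodular functions with coefficients $c(u)$ whose signs are dictated by the disturbed weights. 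When every $c(u)\ge 0$ (for instance $S_r=\emptyset$, giving $c(u)=p(u)\ge 0$) the sum is submodular, whereas any node with $c(u)<0$ contributes a supermodular term; since $c(u)=q(u)$ whenever $r(u)=1$, a negative $q(u)$ on a rival-saturated node is precisely what can destroy submodularity. This is also the conceptual reason $f$ is neither: it is a difference of two submodular functions, foreshadowing the decomposition used later.

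With this in hand I would exhibit a single three-node witness that breaks both properties by a sign choice. Take $V=\{a,b,t\}$ with deterministic edges $a\to t$ and $b\to t$ (edge probability $1$, which the model permits), and rival seed set $S_r=\{t\}$, so that $r(t)=1$ and $r(a)=r(b)=0$. Let $p(a),p(b)$ be arbitrary admissible values (they will cancel) and let $p(t),q(t)$ satisfy the constraint $q(t)\le p(t)$. Direct evaluation gives $f(\emptyset)=0$, $f(\{a\})=p(a)+q(t)$, $f(\{b\})=p(b)+q(t)$, and $f(\{a,b\})=p(a)+p(b)+q(t)$, since $t$ is covered as soon as one of $a,b$ is seeded and is always rival-reached. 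Testing the submodular inequality on $S=\emptyset\subseteq T=\{a\}$ with added element $b$ reduces, after cancellation of $p(b)$, to $q(t)\ge 0$, while the supermodular inequality on the same triple reduces to $q(t)\le 0$.

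Consequently, choosing $q(t)<0$ makes the marginal gain of $b$ strictly larger on $\{a\}$ than on $\emptyset$, violating submodularity, whereas choosing $q(t)>0$ (with $p(t)\ge q(t)$) reverses the strict inequality and violates supermodularity; in both cases $f$ is not modular either. I do not expect a genuine obstacle here. The only step requiring care is the closed-form reduction: one must verify that the rival reach probability factors out as a constant because $g$ and $g'$ are independent, and that the per-node reach probability $\Pr_{g\sim\Omega^p}[u\in I_g(S_p)]$ is monotone submodular in $S_p$ (as an expectation of coverage indicators). Once that is established, the three-node computation is immediate and the sign of $q(t)$ alone decides which property fails, so a single parametrized instance witnesses both non-submodularity and non-supermodularity.
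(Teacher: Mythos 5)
Your proposal is correct, but it takes a genuinely different route from the paper. The paper's proof is a bare-hands counterexample: a fixed small graph (its Fig.~2) with uniform weights $p(u)=p$, $q(u)=q$ and $q<-p$, on which two explicit marginal-gain comparisons are evaluated, violating submodularity and supermodularity \emph{in the same instance}. You instead first derive the closed form $f(S_p)=\sum_{u}c(u)\Pr_{g\sim\Omega^p}[u\in I_g(S_p)]$ with $c(u)=p(u)-(p(u)-q(u))r(u)$, which is valid (the factorization uses exactly the independence of $g$ and $g'$, and the per-node reach probability is indeed monotone submodular), and then read off violations from a minimal three-node witness. Your approach buys more insight: it characterizes \emph{when} submodularity holds (all $c(u)\ge 0$), explains the failure structurally, and anticipates the paper's later DS decomposition and RIS estimators, where the same factorization reappears. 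One subtlety is worth flagging: because each of your two instances is decided by the sign of the single coefficient $c(t)=q(t)$, the instance with $q(t)<0$ is genuinely supermodular and the one with $q(t)>0$ is genuinely submodular, so you exhibit non-submodularity and non-supermodularity on \emph{different} instances. That suffices for the theorem as stated (the OEBI objective is not in general submodular, nor in general supermodular), but it is slightly weaker than the paper's single instance on which both properties fail simultaneously; if you want that stronger conclusion, simply put two sink nodes $t_1,t_2$ in your graph with $q(t_1)<0<q(t_2)\le p(t_2)$, and the two tests then fail in one and the same instance.
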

	\begin{proof}
	    Take a counterexample to prove it, we assume $p=p(u)$ and $q=q(u)$ for each node $u\in V$ with $q\in(-\infty,-p)$. Shown as Fig. \ref{fig2}, we can see that $f(\{v_2,v_4\})=2p-q$ and $f(\{v_1,v_4\})=5p-q$. First, we have $f(\{v_2,v_4\})-f({v_4})=p+q<f(\{v_1,v_2,v_4\})-f({v_1,v_4})=0$, thereby $f(S_p)$ is not submodular with respect to $S_p$. Then, we have $f(\{v_4,v_5\})-f({v_4})=2p>f(\{v_1,v_4,v_5\})-f({v_1,v_4})=0$, thereby $f(S_p)$ is not supermodular with respect to $S_p$. 
	\end{proof}
	\begin{figure}[h]
		\centering
		\subfigure[$S_p=\{v_2,v_4\}$]{
			\includegraphics[width=0.48\linewidth]{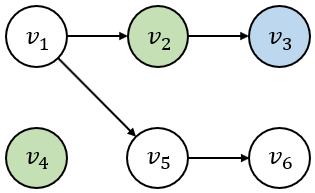}
		}%
		\subfigure[$S_p=\{v_1,v_4\}$]{
			\includegraphics[width=0.48\linewidth]{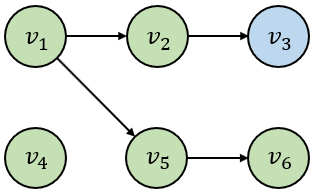}
		}%
		\centering
		\caption{This is an example to demonstrate the submodularity and supermodularity in Theorem 2.}
		\label{fig2}
	\end{figure}

\subsection{Decomposition of Our Objective Function}
	From the above subsection, the expected overall benefit is non-monotone, non-submodular, and non-supermodular, therefore, it is hard to get an effective solution with an approximation ratio. Narasimhan \textit{et al.} \cite{narasimhan2005submodular} proposed a DS decomposition, which pointed out any set function can be decomposed into the difference of two submodular set functions. Even that, whether such two submodular set functions can be found in polynomial time is still unknown. Look at the (4), the overall benefit $f_{g,g'}(S_p)$ under the $g\sim\Omega^p$ and $g'\sim\Omega^r$ can be re-arranged as
	\begin{equation}
		f_{g,g'}(S_p)=\sum_{u\in I_g(S_p)}p(u)-\sum_{u\in I_g(S_p)\cap I_{g'}(S_r)}(p(u)-q(u))
	\end{equation}
	Thus, we can decompose the expected overall benefit as $f(S_p)=w(S_p)-z(S_p)$, where $w(S_p)$ and $z(S_p)$ are defined as follows, that is
	\begin{flalign}
		&w(S_p)=\mathbb{E}_{g\sim\Omega^p}\left[\sum\nolimits_{u\in I_g(S_p)}p(u)\right]\\
		&z(S_p)=\mathbb{E}_{g\sim\Omega^p}\mathbb{E}_{g'\sim\Omega^r}\left[\sum\nolimits_{u\in I_g(S_p)\cap I_{g'}(S_r)}l(u)\right]
	\end{flalign}
	where we denote $l(u)=p(u)-q(u)$. Similarly, we denote $w_g(S_p)=\sum_{u\in I_g(S_p)}p(u)$ under the $g\sim\Omega^p$ and $z_{g,g'}(S_p)=\sum_{u\in I_g(S_p)\cap I_{g'}(S_r)}l(u)$ under the $g\sim\Omega^p$ and $g'\sim\Omega^r$.
	\begin{thm}
		The function $w(S_p)$ is monotone non-decreasing and submodular with respect to $S_p$.
	\end{thm}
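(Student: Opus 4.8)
My plan is to reduce the claim to a per-realization statement and exploit the fact that monotonicity and submodularity are both preserved under nonnegative linear combinations. Recall from (6) that $w(S_p)=\sum_{g\in\mathcal{G}(\Omega^p)}\Pr[g]\cdot w_g(S_p)$ with $w_g(S_p)=\sum_{u\in I_g(S_p)}p(u)$, and every weight $\Pr[g]$ is nonnegative. Hence, if I can show that for each fixed realization $g\sim\Omega^p$ the deterministic set function $w_g(\cdot)$ is monotone non-decreasing and submodular, then the same two properties transfer to $w$ by linearity, since $w$ is just a convex-combination-style average of the $w_g$'s. This isolates the entire argument to a single fixed residual graph $g$, where $I_g(S_p)$ is simply the set of nodes reachable from $S_p$ in an ordinary directed graph.

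For monotonicity of $w_g$, I would first observe that reachability is monotone: whenever $S\subseteq T$, any node reachable from some seed in $S$ is also reachable from a seed in $T$, so $I_g(S)\subseteq I_g(T)$. Because each benefit weight satisfies $p(u)\in\mathbb{R}_+$, summing the nonnegative quantities $p(u)$ over the larger index set $I_g(T)$ can only increase the total, giving $w_g(S)\le w_g(T)$. This step is routine and relies only on $p(u)\ge 0$ together with the monotonicity of graph reachability.

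The crux of the proof is the submodularity of $w_g$, which I would establish through a set-inclusion argument on marginal gains. Fix $S\subseteq T\subseteq V$ and $v\notin T$. The marginal gain can be written as $w_g(S\cup\{v\})-w_g(S)=\sum_{u\in I_g(S\cup\{v\})\setminus I_g(S)}p(u)$, and similarly for $T$. The key claim is the inclusion $I_g(T\cup\{v\})\setminus I_g(T)\subseteq I_g(S\cup\{v\})\setminus I_g(S)$. To see this, take any $u$ reachable from $T\cup\{v\}$ but not from $T$; since it is not reachable from $T$ and $S\subseteq T$ gives $I_g(S)\subseteq I_g(T)$, it is not reachable from $S$ either, and since it is reachable from $T\cup\{v\}$ but not $T$, every path witnessing its reachability must originate at $v$, so $u$ is reachable from $S\cup\{v\}$. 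Thus $u\in I_g(S\cup\{v\})\setminus I_g(S)$, proving the inclusion. Summing $p(u)\ge 0$ over the larger set then yields $w_g(S\cup\{v\})-w_g(S)\ge w_g(T\cup\{v\})-w_g(T)$, i.e. the diminishing-returns property.

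I expect the inclusion $I_g(T\cup\{v\})\setminus I_g(T)\subseteq I_g(S\cup\{v\})\setminus I_g(S)$ to be the main obstacle, as it is the only place requiring genuine reasoning about reachability rather than bookkeeping; everything else (monotonicity, and the transfer from $w_g$ to $w$) follows mechanically once that inclusion and $p(u)\ge 0$ are in hand. This mirrors the classical argument of Kempe \textit{et al.} \cite{kempe2003maximizing} that the expected influence spread is monotone and submodular under the IC-model, here generalized to arbitrary nonnegative node benefits $p(u)$.
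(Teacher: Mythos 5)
Your proof is correct and follows essentially the same route as the paper: the paper's one-line proof simply invokes the reduction of $w(S_p)$ to a weighted maximum set cover problem (which is monotone and submodular since $p(u)\geq 0$), and your per-realization decomposition into coverage functions $w_g$, the reachability-inclusion argument, and the closure under nonnegative linear combinations is exactly the standard argument that reduction rests on, spelled out in full detail.
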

	\begin{proof}
		The function $w(S_p)$ is the objective function of weighted IM problem. It can be reduced to weighted maximum set cover problem, which is monotone non-decreasing and submodular since $p(u)\geq 0$ for any $u\in V$.
	\end{proof}
	\begin{thm}
		The function $z(S_p)$ is monotone non-decreasing and submodular with respect to $S_p$.
	\end{thm}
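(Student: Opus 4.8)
The plan is to mirror the argument used for Theorem 3, reducing $z(S_p)$ to an expectation of weighted coverage functions. First I would fix a pair of realizations $g\sim\Omega^p$ and $g'\sim\Omega^r$ and work with the inner function $z_{g,g'}(S_p)=\sum_{u\in I_g(S_p)\cap I_{g'}(S_r)}l(u)$. The crucial observation is that the rival seed set $S_r$ is given and fixed, so once $g'$ is fixed the set $R:=I_{g'}(S_r)$ becomes a deterministic subset of $V$ that does not depend on $S_p$ at all. Hence the only dependence of $z_{g,g'}$ on $S_p$ is through $I_g(S_p)$, and I can rewrite
\begin{equation}
	z_{g,g'}(S_p)=\sum_{u\in R}l(u)\cdot\mathbf{1}[u\in I_g(S_p)].
\end{equation}

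Next I would establish that each indicator $S_p\mapsto\mathbf{1}[u\in I_g(S_p)]$ is monotone non-decreasing and submodular. For the fixed realization $g$, a node $u$ belongs to $I_g(S_p)$ exactly when $S_p$ contains at least one node from which $u$ is reachable in $g$; writing $A_u(g)$ for this set of ancestors, we have $\mathbf{1}[u\in I_g(S_p)]=\mathbf{1}[S_p\cap A_u(g)\neq\emptyset]$, which is the standard hitting indicator and is therefore monotone and submodular in $S_p$. Since $l(u)=p(u)-q(u)\geq 0$ by the standing assumption $q(u)\leq p(u)$, the function $z_{g,g'}$ is a non-negative weighted sum of such indicators, equivalently a weighted coverage function over the restricted ground set $R$, and so is itself monotone non-decreasing and submodular, just as $w_g$ is in the proof of Theorem 3.

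Finally I would lift the properties from $z_{g,g'}$ to $z$. Because $z(S_p)=\sum_{g,g'}\Pr[g]\Pr[g']\,z_{g,g'}(S_p)$ is a non-negative convex combination (the expectation over the independent realizations $g$ and $g'$) of monotone submodular functions, and both monotonicity and submodularity are preserved under non-negative linear combinations, $z(S_p)$ inherits both properties.

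I expect the only genuine subtlety to be the sign condition on $l(u)$: submodularity is preserved under non-negative combinations but not under arbitrary ones, so the single step that truly invokes a hypothesis of the problem is the appeal to $q(u)\leq p(u)$, which guarantees $l(u)\geq 0$. The reduction to a fixed reachability set $R$ after conditioning on $g'$ is what makes the second term behave like an ordinary weighted coverage function rather than something that could spoil submodularity.
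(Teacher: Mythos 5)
Your proof is correct and follows essentially the same route as the paper: condition on the realizations $g\sim\Omega^p$ and $g'\sim\Omega^r$, observe that $z_{g,g'}(S_p)$ is then a coverage-type function weighted by $l(u)=p(u)-q(u)\geq 0$, and lift monotonicity and submodularity to $z(S_p)$ through the non-negative combination over realizations. The only cosmetic difference is that you invoke the standard submodularity of hitting indicators $\mathbf{1}[S_p\cap A_u(g)\neq\emptyset]$, whereas the paper writes out the same reachability marginal-gain argument explicitly; both hinge on exactly the same hypothesis $q(u)\leq p(u)$.
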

	\begin{proof}
		Given a rival seed set $S_r$, realization $g\sim\Omega^p$, and $g'\sim\Omega^r$, we consider the monotonicity and submodularity based on $z_{g,g'}(S_p)$. First, it is apparent that $z_{g,g'}(S_p)$ is monotone non-decreasing with respect to $S_p$. Then, there are two positive seed set $S_p^1$ and $S_p^2$ with $S_p^1\subseteq S_p^2$. For any node in $I_{g'}(S_r)$, if it is reachable from node $v$ but is not reachable from $S_p^2$, it must not be reachable from $S_p^1$ since $S_p^1\subseteq S_p^2$. Thereby we have $z_{g,g'}(S_p^1\cup\{v\})-z_{g,g'}(S_p^1)\geq z_{g,g'}(S_p^2\cup\{v\})-z_{g,g'}(S_p^2)$ because of $l(u)\geq 0$ for any $u\in V$, which implies that $z_{g,g'}(S_p)$ is submodular with respect to $S_p$. Besides, $y(S_p)$ is a linear combination of $z_{g,g'}(S_p)$, thus $z(S_p)$ is monotone non-decreasing and submodular.
	\end{proof}

	Therefore, the expected overall benefit $f(S_p)$ has been decomposed into the difference of two monotone submodular functions $w(S_p)$ and $z(S_p)$ definitely.
	
\section{Algorithm Degisn and Speedup}
	From the last section, our objective function is not monotone, not submodular, and not supermodular. Fortunately, it can be decomposed into the difference of two monotone submodular functions. Iyer \textit{et al.} \cite{iyer2012algorithms} proposed a modular-modular procedure to minimize the difference between two submodular functions approximately. First, we need to define the modular upper bound and modular lower bound for a given submodular function.
	
	\begin{algorithm}[!t]
		\caption{\text{Modular-modular}}\label{a1}
		\begin{algorithmic}[1]
			\renewcommand{\algorithmicrequire}{\textbf{Input:}}
			\REQUIRE A set function $f:2^V\rightarrow\mathbb{R}$
			\STATE Initialize: $X^t\leftarrow\emptyset$, $t\leftarrow0$
			\WHILE {$X^{t+1}\neq X^t$}
			\STATE Selects a permutation $\alpha^t$ that contains $X^t$ where the element in $X^t$ are ranked ahead
			\STATE $X^{t+1}\leftarrow\arg\max_{|Y|\leq k}\left\{h_{X^t,\alpha^t}^w(Y)-m_{X^t}^z(Y)\right\}$
			\STATE $t\leftarrow t+1$
			\ENDWHILE
			\RETURN $X^t$
		\end{algorithmic}
	\end{algorithm}
\subsection{Modular-modular Procedure}
	Given a submodular function $b(\cdot)$, it has two modular upper bounds based on a given set $X\subseteq V$, that is
	\begin{flalign}
		&m_{X,1}^b(Y)=b(X)-\sum_{j\in X\backslash Y}b(j|X\backslash j)+\sum_{j\in Y\backslash X}b(j|\emptyset)\\
		&m_{X,2}^b(Y)=b(X)-\sum_{j\in X\backslash Y}b(j|V\backslash j)+\sum_{j\in Y\backslash X}b(j|X)
	\end{flalign}
	where $b(S|T)=b(S\cup T)-b(T)$, $m_{X,1}^b(Y)\geq b(Y)$, and $m_{X,2}^b(Y)\geq b(Y)$. They are tight at set $X$, so we have $m_{X,1}^b(X)=m_{X,2}^b(X)=f(X)$.
	
	Given a set $X\subseteq V$, we define a permutation $\alpha$ of $V$ as $\alpha=\{\alpha(1),\alpha(2),\cdots,\alpha(n)\}$ where $\eta$'s chain contains $X$. Denote by $S_i^\alpha=\{\alpha(1),\alpha(2),\cdots,\alpha(i)\}$, we have $S_{|X|}^\alpha=X$, in other words, we put all the elements in $X$ prior to the elements in $V\backslash X$. Then, we define
	\begin{equation}
		h_{X,\alpha}^b(\alpha(i))=b(S_i^\alpha)-b(S_{i-1}^\alpha)
	\end{equation}
	where $h_{X,\alpha}^b(Y)=\sum_{v\in Y}h_{X,\alpha}^b(v)$ and $h_{X,\alpha}^b(Y)\leq b(Y)$ for any $Y\subseteq V$. Here, $h_{X,\alpha}^b(Y)$ is a lower bound of $b(Y)$. It is tight at set $X$, wo we have $h_{X,\alpha}^b(X)=b(X)$.
	
	From the (6) and (7), we adopt the modular-modular proceduce to solve it is formulated in Algorithm \ref{a1}.
	\begin{thm}
		The objective function $f(X^t)$ is monotone non-decreasing with respect to $t$. If the $h_{X^t,\alpha^t}^w(Y)-m_{X^t}^z(Y)$ in line 4 of Algorithm \ref{a1} reaches a local maximum under the $O(n)$ different permutations $\alpha^t$ and both upper bounds, then the $f(Y)$ is a local maximum.
	\end{thm}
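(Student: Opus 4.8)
The plan rests on two facts about the modular surrogate $g_t(Y):=h_{X^t,\alpha^t}^w(Y)-m_{X^t}^z(Y)$ that follow at once from the bound relations above: it is a global lower bound, $g_t(Y)\leq w(Y)-z(Y)=f(Y)$ for every $Y\subseteq V$ since $h_{X^t,\alpha^t}^w(Y)\leq w(Y)$ and $m_{X^t}^z(Y)\geq z(Y)$; and it is tight at its base set, $g_t(X^t)=w(X^t)-z(X^t)=f(X^t)$ since both modular bounds are tight at $X^t$. For the monotonicity claim I would run the standard minorize--maximize chain. Every iterate is feasible: $X^0=\emptyset$ obeys $|X^0|\leq k$, and line~4 produces $X^{t+1}$ from a cardinality-constrained maximization, so $|X^{t+1}|\leq k$ by induction. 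Since $X^t$ is feasible and $X^{t+1}$ maximizes $g_t$ over $\{\,|Y|\leq k\,\}$, we have $g_t(X^{t+1})\geq g_t(X^t)$; chaining the lower-bound fact at $X^{t+1}$ with tightness at $X^t$ gives $f(X^{t+1})\geq g_t(X^{t+1})\geq g_t(X^t)=f(X^t)$, which is exactly $f(X^t)\leq f(X^{t+1})$.

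For the local-maximum claim, write $Y=X^t$ and suppose that for each of the $O(n)$ permutations and for both upper bounds the line~4 objective is maximized at $Y$. I would prove $f(Y')\leq f(Y)$ for every single-element neighbour $Y'$ (one deletion or one admissible addition, $|Y|<k$) by exhibiting, for each $Y'$, a permutation-plus-upper-bound pair whose surrogate is exact simultaneously at $Y$ and at $Y'$, so that the surrogate's local optimality transfers verbatim to $f$. The matching is dictated by the prefix structure of $h$ and by which telescoping sum collapses in each $m^z$. For a deletion $Y'=Y\backslash\{v\}$, choose the permutation in which $v$ is the last element of $Y$, so $Y'=S_{|Y|-1}^\alpha$ is a prefix and $h_{Y,\alpha}^w(Y')=w(Y')$ by the telescoping identity $h_{Y,\alpha}^w(S_i^\alpha)=w(S_i^\alpha)-w(\emptyset)=w(S_i^\alpha)$; pair it with the first upper bound, for which $m_{Y,1}^z(Y\backslash\{v\})=z(Y)-z(v|Y\backslash v)=z(Y\backslash v)$. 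For an addition $Y'=Y\cup\{v\}$, choose the permutation placing $v$ immediately after $Y$, so $Y'=S_{|Y|+1}^\alpha$ and again $h_{Y,\alpha}^w(Y')=w(Y')$; pair it with the second upper bound, for which $m_{Y,2}^z(Y\cup\{v\})=z(Y)+z(v|Y)=z(Y\cup v)$.

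In either case both surrogate pieces become exact at the neighbour, so $g(Y')=w(Y')-z(Y')=f(Y')$, while $g(Y)=f(Y)$ by base tightness; the hypothesis that $Y$ is a local maximizer of this very surrogate yields $g(Y')\leq g(Y)$, hence $f(Y')\leq f(Y)$. Letting $v$ range over $Y$ and over $V\backslash Y$ uses exactly the $O(n)$ permutations and the two upper bounds named in the statement, and shows $Y$ is a local maximum of $f$ under single-element moves. The step I expect to be the crux is this tightness bookkeeping: confirming that the chosen permutation keeps $h_{Y,\alpha}^w$ exact at $Y$ while also making it exact at the targeted neighbour, and correctly pairing each move type with the upper bound whose conditional-gain telescoping collapses to the true value of $z$ there ($m_{Y,1}^z$ for deletions, $m_{Y,2}^z$ for additions). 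Once these identities are in hand, the passage from surrogate-local-optimality to $f$-local-optimality is immediate.
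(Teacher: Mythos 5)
Your proposal is correct and follows essentially the same argument as the paper: the monotonicity part is the standard minorize--maximize chain using the surrogate's global-lower-bound property and tightness at $X^t$, and the local-maximum part uses exactly the paper's device of choosing, for each candidate element, a permutation placing it at the boundary position of $X^t$ (so the prefix identity makes $h^w$ exact) paired with $m_{X,1}^z$ for deletions and $m_{X,2}^z$ for additions. Your write-up is in fact slightly more careful than the paper's, which contains a typo ($m_{X^t,1}^z$ where $m_{X^t,2}^z$ is meant in the addition case) and omits the feasibility remark $|Y|<k$ for additions that you correctly include.
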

	\begin{proof}
		Regardless of what the upper bound we use, at any round $t$, we have 		$f(X^{t+1})=w(X^{t+1})-z(X^{t+1})\geq h_{X^t,\alpha^t}^w(X^{t+1})-m_{X^t}^z(X^{t+1})\geq h_{X^t,\alpha^t}^w(X^{t})-m_{X^t}^z(X^{t})=w(X^{t})-z(X^{t})=f(X^{t})$ since the definitions of the upper and lower bounds and the tightness at set $X^{t}$.
		
		Suppose the Algorithm \ref{a1} converges at $X^{t+1}=X^t$, we consider the $O(n)$ different permutations $\alpha^t$ which are placed with different elements at position $\alpha^t(|X^t|)$ and $\alpha^t(|X^{t+1}|)$. First, we have $h_{X,\alpha}^w(S_i^{\alpha})=w(S_i^{\alpha})$, $m^z_{X^t,1}(X^t\backslash j)=z(X^t)-z(j|X^t\backslash j)=z(X^t\backslash j)$, and $m^z_{X^t,2}(X^t\cup j)=z(X^t)+z(j|X^t)=z(X^t\cup j)$. At the convergence, we have $h_{X^t,\alpha^t}^w(X^{t})-m_{X^t}^z(X^{t})\geq h_{X^t,\alpha^t}^w(Y)-m_{X^t}^z(Y)$ for any $Y\subseteq V$ under the $O(n)$ different permutations $\alpha^t$ and both upper bounds. Given a $\alpha^t$ with $\alpha^t(|X^t|)=i$ and $\alpha^t(|X^t|+1)=j$, we have $f(X^t)=w(X^t)-z(X^t)=h_{X^t,\alpha^t}^w(X^{t})-m_{X^t,1}^z(X^{t})\geq h_{X^t,\alpha^t}^w(X^{t}\backslash i)-m_{X^t,1}^z(X^{t}\backslash i)=f(X^t\backslash i)$ and $f(X^t)=w(X^t)-z(X^t)=h_{X^t,\alpha^t}^w(X^{t})-m_{X^t,2}^z(X^{t})\geq h_{X^t,\alpha^t}^w(X^{t}\cup j)-m_{X^t,1}^z(X^{t}\cup j)=f(X^t\cup j)$. Therefore, $f(X^t)$ is a local maximum at the convergence.
	\end{proof}

	\begin{algorithm}[!t]
		\caption{\text{ModularMax}}\label{a2}
		\begin{algorithmic}[1]
			\renewcommand{\algorithmicrequire}{\textbf{Input:}}
			\REQUIRE A permutation $\alpha^t$ and a set $X^t$
			\STATE Initialize: a map $unitValue=\{\}$
			\STATE Initialize: a set $X^{t+1}\leftarrow\emptyset$
			\STATE $zero\leftarrow h_{X^t,\alpha^t}^w(\emptyset)-m_{X^t}^z(\emptyset)$
			\FOR {each $u\in V$}
			\STATE $unitValue[u]\leftarrow h_{X^t,\alpha^t}^w(\{u\})-m_{X^t}^z(\{u\})-zero$
			\ENDFOR
			\FOR {$i=1$ to $k$}
			\STATE Select $u^*\in\max_{u\in V\backslash X^{t+1}}unitValue[u]$
			\IF {$unitValue[u^*]<0$}
			\STATE Break
			\ENDIF
			\STATE $X^{t+1}\leftarrow X^{t+1}\cup\{u^*\}$
			\ENDFOR
			\RETURN $X^{t+1}$
		\end{algorithmic}
	\end{algorithm}

	At each iteration in this algorithm, we need to maximize a modular function shown as in line 4 of Algorithm \ref{a1}, which can be implemented easily. For example, we can compute the objective value for each node $u\in V$ and then select all those which has a non-negative objective value. At the iteration $t$, given a permutation $\alpha^t$ and a set $X^t$, the algorithm that selects a set $Y$ where $|Y|\leq k$ to maximize the modular function $h_{X^t,\alpha^t}^w(Y)-m_{X^t}^z(Y)$ is shown in Algorithm \ref{a2}. The update rule in Algorithm \ref{a2} is according to $h(u|S)=h(u|T)=h(u|\emptyset)$ for any set $S,T\subseteq V$ if $h(\cdot)$ is a modular function.
	
	As for how to select a permutation $\alpha^t$ at each iteration $X^t$, the optimal solution is to select a permutation $\alpha^*$ such that $\alpha^t_*\in\arg\max_{\alpha^t}\max_{|Y|\leq k}\{h_{X^t,\alpha^t}^w(Y)-m_{X^t}^z(Y)\}$, however it is very difficult to execute. There are $n!$ permutations in total. Thus, a heuristic choice is to order the permutation $\alpha^t$ according to the magnititude of objective value for each node $u\in V$. We will compare the impact of different permutations on algorithm performance in later experiments.
	
	According to the (8) and (9), we have two upper bounds for a submodular function. Thereby the upper bound of the optimal value of our expected overall benefit $f(S_p^*)$ can be defined as follows:
	\begin{equation}
		\pi(X)=\max_{|Y|\leq k}\{\min\{m_{X,1}^w(Y),m_{X,2}^w(Y)\}-h^z_{X,\alpha}(Y)\}
	\end{equation}
	where $\min\{m_{X,1}^w(Y),m_{X,2}^w(Y)\}$ is aimed to make this upper bound tighter. It can be solved similar to the process of Algorithm \ref{a2}. Then, for any set $X$, we have $\pi(X)\geq\max_{|Y|\leq k}f(Y)$. Denote by $S_p^\circ$ the seed set returned by Algorithm \ref{a1}, we have $\pi(S_p^\circ)\geq f(S_p^*)$, then we are able to estimate the approximation ratio by $f(S_p^\circ)/\pi(S_p^\circ)$.
	
\subsection{Sampling Techniques}
	Given a seed set $S_p$, we adopt the technique of reverse influence sampling (RIS) to estimate $f(S_p)$ due to its \#P-hardness. Consider the IM problem under the IC-model $\Omega=(G,P)$, we introduce the concept of reverse reachable set (RR-set) first. A random RR-set $R$ can be generated by three steps: (1) selecting a node $u\in V$ uniformly; (2) sampling a realization $g\sim\Omega$; and (3) collecting those nodes in $g$ can reach $u$ and putting them into $R$. A RR-set rooted at node $u$ is a collection of nodes that are likely to influence $u$. A larger expected influence spread a seed set $S$ has, the higher the probability that $S$ intersects with a random RR-set is. Given a seed set $S$ and a random RR-set $R$, we have $\sigma_\Omega(S)=n\cdot\Pr[R\cap S\neq\emptyset]$.
	
	Back to our OEBI problem, the expected overall benefit can be denoted by $f(S_p)=w(S_p)-z(S_p)$. Thus, given a seed set $S_p$, we require to estimate $w(S_p)$ and $z(S_p)$ respectively. Here, we define $p(V)=\sum_{v\in V}p(v)$ and $l(V)=\sum_{v\in V}l(v)$ respectively for convenience. For the $w(S_p)$, a random RR-set $R_w$ can be generated by (1) selecting a node $u\in V$ with probability $p(u)/p(V)$; (2) sampling a realization $g\sim\Omega^p$; and (3) putting those nodes in $g$ can reach $u$ into $R_p$. Given a seed set $S_p$ and a random RR-set $R_w$, we have $w(S)=p(V)\cdot\Pr[R_w\cap S_p\neq\emptyset]$. For the $z(S_p)$, a random RR-set $R_z$ can be generated by (1) selecting a node $u\in V$ with probability $l(u)/l(V)$; (2) sampling a realization $g\sim\Omega^p$ and a realization $g'\sim\Omega^r$ independently; and (3) putting those nodes in $g$ can reach $u$ into $R_{z,1}$ and those nodes in $g'$ can reach $u$ into $R_{z,2}$ where $R_z=(R_{z,1},R_{z,2})$.
	\begin{lem}
		Given a seed set $S_p$, a rival seed set $S_r$, and a random RR-set $R_z=(R_{z,1},R_{z,2})$, we have
		\begin{equation}
			z(S_p)=l(V)\cdot\Pr\left[S_p\cap R_{z,1}\neq\emptyset\land S_r\cap R_{z,2}\neq\emptyset\right]
		\end{equation}
	\end{lem}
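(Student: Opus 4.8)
The plan is to expand the definition of $z(S_p)$ as an expectation over realizations, rewrite the inner sum over the intersection $I_g(S_p)\cap I_{g'}(S_r)$ as a sum over all of $V$ weighted by indicator variables, and then match the resulting expression term-by-term against the right-hand-side probability after conditioning on the random root of $R_z$. The whole argument is the reverse-reachability duality underlying RIS, wrapped in linearity of expectation.

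First I would introduce indicators to encode membership in the intersection, writing $z(S_p)=\mathbb{E}_{g\sim\Omega^p}\mathbb{E}_{g'\sim\Omega^r}\left[\sum_{u\in V}l(u)\cdot\mathbf{1}[u\in I_g(S_p)]\cdot\mathbf{1}[u\in I_{g'}(S_r)]\right]$. By linearity of expectation I pull the sum over $u\in V$ outside the expectation to get $z(S_p)=\sum_{u\in V}l(u)\cdot\Pr_{g,g'}[u\in I_g(S_p)\land u\in I_{g'}(S_r)]$. Factoring out $l(V)$ then converts the weights $l(u)$ into the RR-set sampling distribution $l(u)/l(V)$, yielding $z(S_p)=l(V)\sum_{u\in V}\frac{l(u)}{l(V)}\Pr_{g,g'}[u\in I_g(S_p)\land u\in I_{g'}(S_r)]$.

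The central step is the reverse-reachability equivalence that connects this sum to the RR-set event. By the construction of $R_z$, once the root $u$ is fixed, $R_{z,1}$ is exactly the set of nodes that can reach $u$ in $g$; hence a node $v\in S_p$ belongs to $R_{z,1}$ precisely when $u$ is reachable from $v$ in $g$, so $S_p\cap R_{z,1}\neq\emptyset$ holds if and only if $u\in I_g(S_p)$. Applying the identical reasoning to $g'$ and the rival seed set gives $S_r\cap R_{z,2}\neq\emptyset$ if and only if $u\in I_{g'}(S_r)$. Substituting these two equivalences and conditioning the right-hand-side probability on the random choice of root $u$ (drawn with probability $l(u)/l(V)$ and independent of both $g$ and $g'$) reproduces exactly the summation above, which establishes the claimed identity.

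I expect the reverse-reachability equivalence to be the only non-mechanical point, and the subtlety to watch is that $R_{z,1}$ and $R_{z,2}$ are built from \emph{independent} realizations $g$ and $g'$ yet share the \emph{same} root $u$; this is precisely why the conjunction of the two intersection events correctly captures the single event $u\in I_g(S_p)\cap I_{g'}(S_r)$ for that common root, matching the product measure $\Omega^p\times\Omega^r$ in the definition of $z$. Everything else reduces to linearity of expectation and the law of total probability over the choice of $u$.
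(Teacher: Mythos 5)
Your proof is correct and follows essentially the same route as the paper's: expand $z(S_p)$ by linearity of expectation into a sum over $u\in V$ of $l(u)$ times the probability that $u$ is reachable from both $S_p$ in $g$ and $S_r$ in $g'$, invoke the reverse-reachability equivalence to identify these events with $S_p\cap R_{z,1}\neq\emptyset$ and $S_r\cap R_{z,2}\neq\emptyset$, normalize by $l(V)$, and absorb the sum into the random choice of root. Your write-up is in fact somewhat more explicit than the paper's (which applies the reachability duality silently within a single chain of equalities), but the underlying argument is identical.
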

	\begin{proof}
		We denote by $R_{z,1}(g,u)$ the RR-set rooted at node $u$ under the realization $g\sim\Omega^p$. From the (7), we have $z(S_p)=\mathbb{E}_{g\sim\Omega^p}\mathbb{E}_{g'\sim\Omega^r}[\sum\nolimits_{u\in I_g(S_p)\cap I_{g'}(S_r)}l(u)]=\sum_{u\in V}\Pr_{g\sim\Omega^p,g'\sim\Omega^r}[S_p\cap R_{z,1}(g,u)\neq\emptyset\land S_r\cap R_{z,2}(g',u)\neq\emptyset]\cdot l(u)=l(V)\cdot\sum_{u\in V}\Pr_{g\sim\Omega^p,g'\sim\Omega^r}[S_p\cap R_{z,1}(g,u)\neq\emptyset\land S_r\cap R_{z,2}(g',u)\neq\emptyset]\cdot(l(u)/l(V))=l(V)\cdot\Pr_{g\sim\Omega^p,g'\sim\Omega^r,u}[S_p\cap R_z(g,g',u)\neq\emptyset\land S_r\cap R_z(g,g',u)\neq\emptyset]$. The (12) is establish equivalently.
	\end{proof}

	As mentioned above, we have to generate two collections of RR sets, $\mathcal{R}_w=\{R_w^1,R_w^2,\cdots,R_w^\lambda\}$ to estimate $w(S_p)$ and $\mathcal{R}_z=\{R_z^1,R_z^2,\cdots,R_z^\mu\}$ to estimate $z(S_p)$. Them we define the following two estimations
	\begin{flalign}
		&F_{\mathcal{R}_w}(S_p)=\frac{1}{\lambda}\cdot\sum_{i=1}^{\lambda}\mathbb{I}[S_p\cap R_w^i\neq\emptyset]\\
		&F_{\mathcal{R}_z}(S_p)=\frac{1}{\mu}\cdot\sum_{i=1}^{\mu}\mathbb{I}[S_p\cap R_{z,1}^i\neq\emptyset\land S_r\cap R_{z,2}^i\neq\emptyset]
	\end{flalign}
	the fraction of RR-sets covered by $S_p$ where $\mathbb{I}[\cdot]$ is an indicator such that $\mathbb{I}[S_p\cap R_w^i\neq\emptyset]=1$ if $S_p\cap R_w^i\neq\emptyset]=1$, or else $\mathbb{I}[S_p\cap R_w^i\neq\emptyset]=0$. Then, we have $\hat{w}(S_p)=p(V)\cdot F_{\mathcal{R}_w}(S_p)$, $\hat{z}(S_p)=l(V)\cdot F_{\mathcal{R}_z}(S_p)$, and $\hat{f}(S_p)=\hat{w}(S_p)-\hat{z}(S_p)$. Next, to bound the gap between ground-truth and estimator, we introduce the Chernoff-Hoeffding inequality.
	\begin{lem}[Chernoff-Hoeffding]
		Let $X_1,X_2,\cdots,X_\theta$ be a series of random variables sampled from a distribution $X$ with expectation $\mathbb{E}[X]$ independently and identically in the set $\{0,1\}$. Given an error $\varepsilon>0$, we have
		\begin{flalign}
			&\Pr\left[\sum\nolimits_{i=1}^{\theta}X_i-\theta\cdot\mathbb{E}[X]\geq+\varepsilon\right]\leq\exp\left(-\frac{2\varepsilon^2}{\theta}\right)\\
			&\Pr\left[\sum\nolimits_{i=1}^{\theta}X_i-\theta\cdot\mathbb{E}[X]\leq-\varepsilon\right]\leq\exp\left(-\frac{2\varepsilon^2}{\theta}\right)
		\end{flalign}
	\end{lem}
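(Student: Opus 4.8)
The plan is to prove both tail bounds via the classical Chernoff (exponential-Markov) method, obtaining the first inequality in full and then deducing the second by symmetry. Write $S=\sum_{i=1}^{\theta}X_i$ and $\mu=\theta\cdot\mathbb{E}[X]=\mathbb{E}[S]$, and center each variable by setting $Y_i=X_i-\mathbb{E}[X]$, so that $\mathbb{E}[Y_i]=0$ and $Y_i$ takes values in an interval of length one, namely $[-\mathbb{E}[X],\,1-\mathbb{E}[X]]$. For any $s>0$, Markov's inequality applied to the nonnegative random variable $e^{s(S-\mu)}$ gives
\begin{equation}
\Pr[S-\mu\geq\varepsilon]=\Pr\!\left[e^{s(S-\mu)}\geq e^{s\varepsilon}\right]\leq e^{-s\varepsilon}\,\mathbb{E}\!\left[e^{s(S-\mu)}\right].
\end{equation}
Because $X_1,\dots,X_\theta$ are independent and identically distributed, the moment generating function factorizes as $\mathbb{E}[e^{s(S-\mu)}]=\prod_{i=1}^{\theta}\mathbb{E}[e^{sY_i}]=\big(\mathbb{E}[e^{sY_1}]\big)^{\theta}$, which reduces the problem to controlling a single centered, bounded moment generating function.

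The key step is Hoeffding's lemma: for a mean-zero random variable $Y$ supported in an interval $[a,b]$, one has $\mathbb{E}[e^{sY}]\leq\exp\!\big(s^2(b-a)^2/8\big)$. Since here $b-a=1$, this yields $\mathbb{E}[e^{sY_1}]\leq\exp(s^2/8)$, and substituting back gives $\Pr[S-\mu\geq\varepsilon]\leq\exp(-s\varepsilon+\theta s^2/8)$ for every $s>0$. It then remains to optimize the exponent over $s$: the quadratic $-s\varepsilon+\theta s^2/8$ is minimized at $s=4\varepsilon/\theta$, at which point its value is exactly $-2\varepsilon^2/\theta$. This produces the first claimed bound $\Pr[\sum_i X_i-\theta\mathbb{E}[X]\geq\varepsilon]\leq\exp(-2\varepsilon^2/\theta)$.

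For the lower tail I would simply apply the identical argument to the variables $\widetilde X_i=-X_i$ (equivalently, replace $Y_i$ by $-Y_i$, which is again mean-zero and supported in an interval of length one); the event $\sum_i X_i-\theta\mathbb{E}[X]\leq-\varepsilon$ becomes an upper-tail event for $\sum_i\widetilde X_i$, so the same computation delivers the second inequality with the same exponent. The main obstacle is Hoeffding's lemma itself, since everything else is bookkeeping: its proof requires bounding $e^{sy}$ on $[a,b]$ by its chord using convexity, taking expectations, and then showing that the cumulant function $\psi(s)=\log\mathbb{E}[e^{sY}]$ satisfies $\psi(0)=\psi'(0)=0$ and $\psi''(s)\leq(b-a)^2/4$ (the latter being a bounded-variance estimate for a suitably tilted Bernoulli-type distribution), after which a second-order Taylor expansion gives the stated bound. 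I would either cite this standard lemma or include its short convexity proof; the precise constant $1/8$ there is exactly what yields the sharp factor $2$ in the final exponent.
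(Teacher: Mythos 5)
The paper does not prove this lemma at all: it is imported as a standard concentration bound (Chernoff--Hoeffding) and used as a black box in the subsequent estimation lemmas, so there is no in-paper argument to compare against. Your proposal supplies the standard textbook proof, and it is correct: centering the variables, applying Markov's inequality to $e^{s(S-\mu)}$, factorizing the moment generating function by independence, invoking Hoeffding's lemma with interval length $b-a=1$, and optimizing the exponent $-s\varepsilon+\theta s^2/8$ at $s=4\varepsilon/\theta$ indeed yields exactly $\exp(-2\varepsilon^2/\theta)$, and the lower tail follows by running the same argument on $-X_i$ (still mean-zero after centering, still supported in an interval of length one). The only dependency you leave partially open is Hoeffding's lemma itself, which you correctly identify as the crux and sketch accurately --- the chord bound via convexity, then $\psi(0)=\psi'(0)=0$ together with $\psi''(s)\leq(b-a)^2/4$ and a second-order Taylor expansion; either citing that lemma or including this short proof would make the argument fully self-contained, which is already more than the paper does.
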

	According to the Lemma 2, we can get the relationship between $F_{\mathcal{R}_w}(S_p)$ and its real value $w(S_p)$.
	\begin{lem}
		Given a collection of RR-sets $\mathcal{R}_w$ with $|\mathcal{R}_w|=\lambda$ and any $\delta\in(0,4)$, we have
		\begin{flalign}
			&\Pr\left[w(S_p)\geq \hat{w}(S_p)-p(V)\sqrt{\frac{1}{2\lambda}\ln\left(\frac{4}{\delta}\right)}\right]\geq 1-\frac{\delta}{4}\\
			&\Pr\left[w(S_p)\leq \hat{w}(S_p)+p(V)\sqrt{\frac{1}{2\lambda}\ln\left(\frac{4}{\delta}\right)}\right]\geq 1-\frac{\delta}{4}
		\end{flalign}
	\end{lem}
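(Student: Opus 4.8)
The plan is to recognize that $\hat{w}(S_p)$ is an average of i.i.d.\ Bernoulli trials and then apply the Chernoff-Hoeffding inequality of Lemma 2 directly. First I would define the random variables $X_i=\mathbb{I}[S_p\cap R_w^i\neq\emptyset]$ for $i=1,\dots,\lambda$. Since each RR-set $R_w^i$ is generated independently by the same sampling procedure, the $X_i$ are i.i.d.\ and take values in $\{0,1\}$, so Lemma 2 applies. By the identity $w(S_p)=p(V)\cdot\Pr[R_w\cap S_p\neq\emptyset]$ established just before the lemma, their common expectation is $\mathbb{E}[X]=w(S_p)/p(V)$, and by definition $\sum_{i=1}^{\lambda}X_i=\lambda\cdot F_{\mathcal{R}_w}(S_p)=\lambda\cdot\hat{w}(S_p)/p(V)$.

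With this translation in hand, the next step is to choose the error parameter $\varepsilon$ in Lemma 2 so that the tail bound matches the prescribed failure probability $\delta/4$. Setting the right-hand side of (15) equal to $\delta/4$, that is $\exp(-2\varepsilon^2/\lambda)=\delta/4$, yields $\varepsilon=\sqrt{(\lambda/2)\ln(4/\delta)}$; the hypothesis $\delta\in(0,4)$ guarantees $\ln(4/\delta)>0$, so this quantity is well defined and real. Substituting this $\varepsilon$ back into (15) gives $\Pr[\sum_{i=1}^{\lambda}X_i-\lambda\mathbb{E}[X]\geq\varepsilon]\leq\delta/4$.

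Then I would simply unwind the substitution. Dividing the event inside the probability by $\lambda$ and multiplying by $p(V)$ converts $\sum_{i=1}^{\lambda}X_i-\lambda\mathbb{E}[X]$ into $\hat{w}(S_p)-w(S_p)$, and converts the threshold $\varepsilon$ into $(p(V)/\lambda)\cdot\varepsilon=p(V)\sqrt{(1/2\lambda)\ln(4/\delta)}$. Passing to the complementary event then rearranges into the claimed inequality (17). The second bound (18) follows in exactly the same way by invoking the lower-tail inequality (16) in place of (15).

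The only thing that requires care is the bookkeeping: matching $\varepsilon$ to the target probability $\delta/4$ and then rescaling by the factor $p(V)/\lambda$ so that the threshold collapses to precisely $p(V)\sqrt{(1/2\lambda)\ln(4/\delta)}$. There is no genuine obstacle beyond this arithmetic, since the independence and boundedness of the RR-set indicators make Lemma 2 immediately applicable; the role of the constant $4$ inside the logarithm is exactly to produce the $1-\delta/4$ confidence, anticipating a later union bound over the four one-sided estimates for $w$ and $z$.
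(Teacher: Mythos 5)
Your proposal is correct and follows essentially the same route as the paper's own proof: both identify the indicators $\mathbb{I}[S_p\cap R_w^i\neq\emptyset]$ as i.i.d.\ $\{0,1\}$ variables with mean $w(S_p)/p(V)$, choose $\varepsilon=\sqrt{(\lambda/2)\ln(4/\delta)}$ so that the Chernoff--Hoeffding tail in Lemma 2 equals $\delta/4$, and then rescale by $p(V)/\lambda$ and pass to the complementary event to obtain (17) and (18) from (15) and (16) respectively. The only difference is presentational (the paper starts from the complement of (17) and works forward, while you derive the tail bound first and unwind), which is immaterial.
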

	\begin{proof}
		To the (17), it is equivalent to prove $\Pr[w(S_p)< \hat{w}(S_p)-p(V)\cdot\sqrt{(1/(2\lambda))\ln(4/\delta)}]\leq\delta/4$. Then, we have $\Pr[w(S_p)< p(V)\cdot F_{\mathcal{R}_w}(S_p)-p(V)\cdot\sqrt{(1/(2\lambda))\ln(4/\delta)}]=\Pr[\lambda\cdot F_{\mathcal{R}_w}(S_p)-\lambda\cdot w(S_p)/p(V)>\sqrt{(\lambda/2)\ln(4/\delta)}]\leq\exp(-2\cdot(\lambda/2)\ln(4/\delta)/\lambda)=\delta/4$ based on the (15).
		
		Similarly, to the (18), it is equivalent to prove $\Pr[w(S_p)> \hat{w}(S_p)+p(V)\cdot\sqrt{(1/(2\lambda))\ln(4/\delta)}]\leq\delta/4$. Then, we have $\Pr[w(S_p)> p(V)\cdot F_{\mathcal{R}_w}(S_p)+p(V)\cdot\sqrt{(1/(2\lambda))\ln(4/\delta)}]=\Pr[\lambda\cdot F_{\mathcal{R}_w}(S_p)-\lambda\cdot w(S_p)/p(V)<-\sqrt{(\lambda/2)\ln(4/\delta)}]\leq\exp(-2\cdot(\lambda/2)\ln(4/\delta)/\lambda)=\delta/4$ based on the (16).
	\end{proof}
	Given an unbiased estimator $\hat{w}(S_p)$, an upper bound and a lower bound of $w(S_p)$ can be defined with at least $1-\delta/4$ probability. Given an unbiased estimator $\hat{w}(S_p)$, an upper bound and a lower bound of $w(S_p)$ can be defined with at least $1-\delta/4$ probability. That is
	\begin{flalign}
		w_u(S_p)&=\hat{w}(S_p)+p(V)\cdot\sqrt{(1/(2\lambda))\ln({4}/{\delta})}\\
		w_l(S_p)&=\hat{w}(S_p)-p(V)\cdot\sqrt{(1/(2\lambda))\ln({4}/{\delta})}
	\end{flalign}
	Given a collection of RR-sets $\mathcal{R}_z$ with $|\mathcal{R}_z|=\mu$, any $\delta\in(0,4)$, and an unbiased estimator $\hat{z}(S_p)$, an upper bouand and a lower bound of $z(S_p)$ can be defined at least $1-\delta/4$ probability in the same way. That is
	\begin{flalign}
		z_u(S_p)&=\hat{z}(S_p)+l(V)\cdot\sqrt{(1/(2\mu))\ln({4}/{\delta})}\\
		z_l(S_p)&=\hat{z}(S_p)-l(V)\cdot\sqrt{(1/(2\mu))\ln({4}/{\delta})}
	\end{flalign}
	Based on the (19)$-$(21), we can derive a lower bound for our objective value $f(S_p)$ naturally. 
	\begin{lem}
		Given any seed set $S_p\subseteq V$, we can take $w_u(S_p)-z_l(S_p)$ as an upper bound of $f(S_p)$ with at least $1-\delta/2$ probability and $w_l(S_p)-z_u(S_p)$ as a lower bound of $f(S_p)$ with at least $1-\delta/2$ probability.
	\end{lem}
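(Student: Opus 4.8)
The plan is to reduce the statement to the four one-sided tail bounds already established and combine them with a union bound, taking care to match the direction of each inequality against the sign with which the corresponding term enters $f$. Since $f(S_p)=w(S_p)-z(S_p)$ by the decomposition in (6)--(7), and since the estimators for $w$ and $z$ are handled completely separately by Lemma 3 and its stated $z$-analogue, no new concentration argument is needed: the whole proof is bookkeeping on probabilities.

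First I would collect the guarantees already proved. From Lemma 3 together with the definitions (19)--(22), the four events
\[
A_u=\{w(S_p)\le w_u(S_p)\},\qquad A_l=\{w(S_p)\ge w_l(S_p)\},
\]
\[
B_u=\{z(S_p)\le z_u(S_p)\},\qquad B_l=\{z(S_p)\ge z_l(S_p)\}
\]
each hold with probability at least $1-\delta/4$, so each of their complements has probability at most $\delta/4$. For the upper bound I would then work on the event $A_u\cap B_l$: there we simultaneously have $w(S_p)\le w_u(S_p)$ and $-z(S_p)\le -z_l(S_p)$, and adding these two inequalities gives $f(S_p)=w(S_p)-z(S_p)\le w_u(S_p)-z_l(S_p)$. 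By the union bound, $\Pr[A_u\cap B_l]\ge 1-\Pr[\overline{A_u}]-\Pr[\overline{B_l}]\ge 1-\delta/2$, which is exactly the claimed guarantee. The lower bound is symmetric: on $A_l\cap B_u$ we have $w(S_p)\ge w_l(S_p)$ and $-z(S_p)\ge -z_u(S_p)$, hence $f(S_p)\ge w_l(S_p)-z_u(S_p)$, and the same union bound gives probability at least $1-\delta/2$.

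The only point that genuinely requires attention — and the one I would flag as the main (albeit modest) obstacle — is the sign matching. Because $z$ enters $f$ with a negative coefficient, bounding $f$ \emph{from above} must pair the upper bound on $w$ with the \emph{lower} bound on $z$, and bounding $f$ from below reverses both choices; mixing them up would pair two same-direction bounds and break the subtraction. I would also note explicitly that no independence between $\mathcal{R}_w$ and $\mathcal{R}_z$ is required, since the union bound is agnostic to the joint law of the two estimators; this is precisely what permits splitting the single $\delta/2$ budget into two clean $\delta/4$ pieces regardless of how the two sample collections are generated.
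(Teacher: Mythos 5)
Your proposal is correct, and it follows the same pairing strategy as the paper: to upper-bound $f(S_p)=w(S_p)-z(S_p)$ you combine the event $\{w(S_p)\le w_u(S_p)\}$ with $\{z(S_p)\ge z_l(S_p)\}$, and symmetrically for the lower bound. The only real difference is the final probability step. The paper bounds the probability of the intersection by writing it as the product $(1-\delta/4)\cdot(1-\delta/4)\ge 1-\delta/2$, which implicitly uses the independence of the two RR-set collections $\mathcal{R}_w$ and $\mathcal{R}_z$ (and is, strictly speaking, written with an equality where only an inequality is justified, since Lemma 3 provides lower bounds on those probabilities rather than exact values). You instead apply a union bound on the complements, $\Pr[A_u\cap B_l]\ge 1-\Pr[\overline{A_u}]-\Pr[\overline{B_l}]\ge 1-\delta/2$, which requires no assumption whatsoever about the joint law of the two estimators --- a point you correctly flag. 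Your route is therefore marginally more general and more rigorous than the paper's, at no cost: both arguments land on exactly the same $1-\delta/2$ guarantee. Your emphasis on sign matching (pairing the upper bound on $w$ with the lower bound on $z$, and vice versa) is also exactly the content of the paper's two displayed chains of inequalities.
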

	\begin{proof}
		To estimate the $f(S_p)$, we have $\Pr[f(S_p)\leq w_u(S_p)-z_l(S_p)]\geq\Pr[(w(S_p)\leq w_u(S_p))\land(z(S_p)\geq z_l(S_p))]=(1-\delta/4)\cdot(1-\delta/4)\geq 1-\delta/2$. Similarly, we have $\Pr[f(S_p)\geq w_l(S_p)-z_u(S_p)]\geq\Pr[(w(S_p)\geq w_l(S_p))\land(z(S_p)\leq z_u(S_p))]=(1-\delta/4)\cdot(1-\delta/4)\geq 1-\delta/2$.
	\end{proof}
	Next, we are going to discuss how to compute the upper bound of our objective value $\pi(S_p^\circ)$ according to the solution $S_p^\circ$ returned by Algorithm \ref{a1}. The value of $\hat{\pi}(S_p)$ can be obtained by $\hat{f}(S_p)$, which has been decomposed as $\hat{f}(S_p)=\hat{w}(S_p)-\hat{z}(S_p)$. Here, $\hat{w}(S_p)$ and $\hat{z}(S_p)$ are monotone and submodular with respect to $S_p$ as well since they can be reduced to the set coverage problem. Therefore, for any set $X$, we have $\hat{\pi}(X)\geq\max_{|Y|\leq k}\hat{f}(Y)$. From the Lemma 4, the objective value $f(S_p)$ is upper bounded by $w_u(S_p)-z_l(S_p)$ with a high probability. Thereby we have the following conclusion.
	\begin{lem}
		Given the solution $S_p^\circ$ returned by Algorithm \ref{a1}, for any seed set $S_p\subseteq V$ and any $\delta\in(0,4)$, we have
		\begin{flalign}
			f(S_p)&\leq\hat{\pi}(S_p^\circ)\nonumber\\
			&+p(V)\sqrt{\frac{1}{2\lambda}\ln\left(\frac{4}{\delta}\right)}+l(V)\sqrt{\frac{1}{2\mu}\ln\left(\frac{4}{\delta}\right)}
		\end{flalign}
		holds with at least $1-2/\delta$ probability.
	\end{lem}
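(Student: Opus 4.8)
The plan is to chain one deterministic inequality, coming from the modular bounds that define $\hat{\pi}$, with the single probabilistic guarantee of Lemma 4. The key algebraic observation is that the two error terms in the statement are exactly what separate the confidence bounds $w_u,z_l$ from the estimators $\hat{w},\hat{z}$, so almost no new computation is needed.

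First I would rewrite the upper confidence bound $w_u(S_p)-z_l(S_p)$ in terms of $\hat{f}$. Using the definitions of $w_u$ in (19) and $z_l$ in (22) together with $\hat{f}(S_p)=\hat{w}(S_p)-\hat{z}(S_p)$, one obtains
\begin{equation}
	w_u(S_p)-z_l(S_p)=\hat{f}(S_p)+p(V)\sqrt{\frac{1}{2\lambda}\ln\left(\frac{4}{\delta}\right)}+l(V)\sqrt{\frac{1}{2\mu}\ln\left(\frac{4}{\delta}\right)}.
\end{equation}
By Lemma 4 the event $f(S_p)\le w_u(S_p)-z_l(S_p)$ occurs with probability at least $1-\delta/2$, so on that event $f(S_p)$ is bounded by the right-hand side above.

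Next I would dominate $\hat{f}(S_p)$ by $\hat{\pi}(S_p^\circ)$ deterministically. Since $\hat{w}$ and $\hat{z}$ reduce to weighted set coverage, they are monotone and submodular, hence the modular upper bounds (8)--(9) built on $\hat{w}$ dominate it pointwise while the modular lower bound (10) built on $\hat{z}$ lies below it pointwise. Therefore, for any reference set $X$ and every budget-feasible $Y$ with $|Y|\le k$,
\begin{equation}
	\hat{f}(Y)=\hat{w}(Y)-\hat{z}(Y)\le\min\{m^{\hat{w}}_{X,1}(Y),m^{\hat{w}}_{X,2}(Y)\}-h^{\hat{z}}_{X,\alpha}(Y),
\end{equation}
and taking the maximum over such $Y$ gives $\hat{\pi}(X)\ge\max_{|Y|\le k}\hat{f}(Y)$ for every $X$, in particular for $X=S_p^\circ$. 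For a feasible $S_p$ this yields $\hat{f}(S_p)\le\hat{\pi}(S_p^\circ)$; substituting into the bound from the previous step gives the claimed inequality on the same probability-$(1-\delta/2)$ event.

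The point that I expect to need the most care is the confidence level and its interaction with the randomness. The inequality $\hat{f}(S_p)\le\hat{\pi}(S_p^\circ)$ holds for every realization of the sampled RR-sets, so it contributes no failure probability even though both $\hat{\pi}$ and $S_p^\circ$ are random; the only probabilistic ingredient is the single event of Lemma 4, which gives confidence $1-\delta/2$ with no union bound over the exponentially many candidates $Y$. I read the stated ``$1-2/\delta$'' as a typo for $1-\delta/2$. I would also state explicitly the implicit feasibility restriction $|S_p|\le k$, since $\hat{\pi}$ maximizes only over budget-$k$ sets and the intended application is to the optimum $S_p^{*}$ or another feasible competitor; without it the step $\hat{f}(S_p)\le\hat{\pi}(S_p^\circ)$ can fail.
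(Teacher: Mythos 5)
Your proof is correct and follows essentially the same route as the paper's own: apply Lemma 4 to bound $f(S_p)$ by $w_u(S_p)-z_l(S_p)$, expand this as $\hat{f}(S_p)$ plus the two error terms, and then use the deterministic domination $\hat{f}(S_p)\le\hat{\pi}(S_p^\circ)$ coming from the modular bounds on the submodular estimators. Your two side remarks are also right: the paper's own proof concludes with probability at least $1-\delta/2$ (so the stated ``$1-2/\delta$'' is indeed a typo), and the feasibility restriction $|S_p|\le k$ is implicitly required for the step $\hat{f}(S_p)\le\hat{\pi}(S_p^\circ)$, since $\hat{\pi}$ maximizes only over budget-feasible sets.
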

	\begin{proof}
		According to the Lemma 4, we have $\Pr[f(S_p)\leq w_u(S_p)-z_l(S_p)]\geq1-\delta/2$. Then, $f(S_p)\leq w_u(S_p)-z_l(S_p)=\hat{w}(S_p)-\hat{z}(S_p)+p(V)\cdot\sqrt{(1/(2\lambda))\ln({4}/{\delta})}+l(V)\cdot\sqrt{(1/(2\mu))\ln({4}/{\delta})}=\hat{f}(S_p)+p(V)\cdot\sqrt{(1/(2\lambda))\ln({4}/{\delta})}+l(V)\cdot\sqrt{(1/(2\mu))\ln({4}/{\delta})}\leq\hat{\pi}(S_p^\circ)+p(V)\cdot\sqrt{(1/(2\lambda))\ln({4}/{\delta})}+l(V)\cdot\sqrt{(1/(2\mu))\ln({4}/{\delta})}$, which holds with at least $1-\delta/2$ probability.
	\end{proof}
	\begin{thm}
		The approximation guarantee achieved by the solution $S_p^\circ$ returned by Algorithm \ref{a1} satisfies as follows: $f(S_p^\circ)/\max_{|S_p|\leq k}f(S_p)\geq$
		\begin{equation}
			\frac{w_l(S_p^\circ)-z_u(S_p^\circ)}{\hat{\pi}(S_p^\circ)+p(V)\sqrt{\frac{1}{2\lambda}\ln\left(\frac{4}{\delta}\right)}+l(V)\sqrt{\frac{1}{2\mu}\ln\left(\frac{4}{\delta}\right)}}
		\end{equation}
		holds with at least $1-\delta$ probability.
	\end{thm}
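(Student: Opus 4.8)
The plan is to sandwich the ratio's two pieces separately — lower-bounding the algorithm's value $f(S_p^\circ)$ and upper-bounding the optimum $\max_{|S_p|\le k}f(S_p)=f(S_p^*)$ — using the concentration bounds already established, and then take the quotient. For the numerator I would apply Lemma 4 to the returned set $S_p^\circ$, which gives $f(S_p^\circ)\ge w_l(S_p^\circ)-z_u(S_p^\circ)$ with probability at least $1-\delta/2$.

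For the denominator I would invoke Lemma 5. Its conclusion $f(S_p)\le\hat{\pi}(S_p^\circ)+p(V)\sqrt{(1/(2\lambda))\ln(4/\delta)}+l(V)\sqrt{(1/(2\mu))\ln(4/\delta)}$ holds for \emph{every} feasible $S_p$ with $|S_p|\le k$, because it rests on the set-independent fact $\hat f(S_p)\le\max_{|Y|\le k}\hat f(Y)\le\hat\pi(S_p^\circ)$ together with Lemma 4. Applying it to the (inaccessible) optimal seed set $S_p^*$ therefore upper-bounds $\max_{|S_p|\le k}f(S_p)$ by exactly the denominator appearing in the theorem, again with probability at least $1-\delta/2$. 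The essential point is that this upper bound is built entirely from $S_p^\circ$ and the sample sizes $\lambda,\mu$, so it is computable even though $S_p^*$ is not — which is what turns the inequality into a usable data-dependent guarantee.

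These two bounds come from two separate random events, each failing with probability at most $\delta/2$, so a union bound makes them hold simultaneously with probability at least $1-\delta$, matching the stated confidence. Conditioned on that joint event, writing $A=w_l(S_p^\circ)-z_u(S_p^\circ)$ and $B$ for the denominator, I have $f(S_p^\circ)\ge A$ and $f(S_p^*)\le B$, and the claimed bound $f(S_p^\circ)/f(S_p^*)\ge A/B$ follows by division.

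The step I expect to require the most care is preserving the direction of the inequality when dividing, since the numerator $A$ is not obviously nonnegative and the optimum in the denominator must be positive. I would settle the positivity by noting that $\emptyset$ is feasible with $f(\emptyset)=w(\emptyset)-z(\emptyset)=0$, and that Algorithm \ref{a1} initializes at $X^0=\emptyset$ and improves monotonically by Theorem 5, so $f(S_p^\circ)\ge 0$, while $B>0$ because $\hat\pi(S_p^\circ)\ge\hat f(\emptyset)=0$ and the square-root terms are strictly positive. In the nontrivial regime $f(S_p^*)>0$, the chain $f(S_p^\circ)/f(S_p^*)\ge A/f(S_p^*)\ge A/B$ is valid whenever $A\ge 0$; and when $A<0$ the right-hand side $A/B$ is negative while the left-hand side is nonnegative, so the inequality holds trivially. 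This case split, together with the union bound, is the crux; the remainder is a direct substitution of Lemmas 4 and 5.
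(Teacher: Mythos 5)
Your proposal is correct and follows essentially the same route as the paper's own proof: Lemma 4 applied to $S_p^\circ$ lower-bounds the numerator, Lemma 5 applied to the (unknown) optimum upper-bounds $\max_{|S_p|\leq k}f(S_p)$, and a union bound over the two failure events gives the $1-\delta$ confidence. Your additional case analysis justifying the division step (positivity of the denominator via $\hat{\pi}(S_p^\circ)\geq 0$, and the trivial case $w_l(S_p^\circ)-z_u(S_p^\circ)<0$) is extra rigor that the paper's proof silently skips, and it is handled correctly.
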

	\begin{proof}
		Based on the Lemma 4, we have $f(S_p^\circ)\geq w_l(S_p^\circ)-z_u(S_p^\circ)$ holds with at least $1-\delta/2$ probability. Then based on the Lemma 5, we have $\max_{|S_p|\leq k}f(S_p)\leq\hat{\pi}(S_p^\circ)+p(V)\cdot\sqrt{(1/(2\lambda))\ln({4}/{\delta})}+l(V)\cdot\sqrt{(1/(2\mu))\ln({4}/{\delta})}$ holds with at least $1-\delta/2$ probability. Thereby the approximation (24) is established with at least $1-\delta$ probability.
	\end{proof}
	
\section{Numerical Experiments}
In this section, we carry out several experiments on different datasets to validate the performance of our proposed algorithms. It aims to test the efficiency of modular-modular procedure, shown as Algorithm \ref{a1}, and its effectiveness compared to other heuristic algorithms. All of our experiments are programmed by python, and run on Windows machine with a 3.40GHz, 4 core Intel CPU and 16GB RAM. There are four datasets used in our experiments: (1) NetScience \cite{nr}: a co-authorship network, co-authorship among scientists to publish papers about network science; (2) Wiki \cite{nr}: a who-votes-on-whom network, which comes from the collection Wikipedia voting; (3) Bitcoin \cite{snapnets}: a who-trusts-whom network of people who trade using Bitcoin on a platform called Bitcoin Alpha. The statistics information about these four datasets is represented in Table \ref{table1}. For an undirected graph, each undirected edge is replaced with two reversed directed edges.

\begin{table}[h]
	\renewcommand{\arraystretch}{1.3}
	\caption{The datasets statistics $(K=10^3)$}
	\label{table1}
	\centering
	\begin{tabular}{|c|c|c|c|c|}
		\hline
		\bfseries Dataset & \bfseries n & \bfseries m & \bfseries Type & \bfseries Avg.Degree\\
		\hline
		Netscie & 0.40 K & 1.01 K & undirect & 5.00\\
		\hline
		Wikivot & 1.00 K & 3.15 K & directed & 6.20\\
		\hline
		Bitcoin & 4.00 K & 25.1 K & directed & 12.5\\
		\hline
	\end{tabular}
\end{table}
\subsection{Experimental Settings}
	The diffusion process is based on the IC-model by default. Under the IC-model, we set the diffusion probability $p_{uv}=1/|N^-(v)|$ for each $(u,v)\in E$ as the inverse of $v$'s in-degree, which has been given by many existing researches about the IM problem. For each node $u\in V$, there is a benefit weight and a disturbed wight associated with it. We sample its benefit weight $p(u)$ from $[0,1]$ uniformly and sample its disturbed benefit weight $q(u)$ from $[-1,p(u)]$ uniformly.
	
	\begin{figure*}[!t]
		\centering
		\subfigure[Netscie, modmod-1]{
			\includegraphics[width=0.24\linewidth]{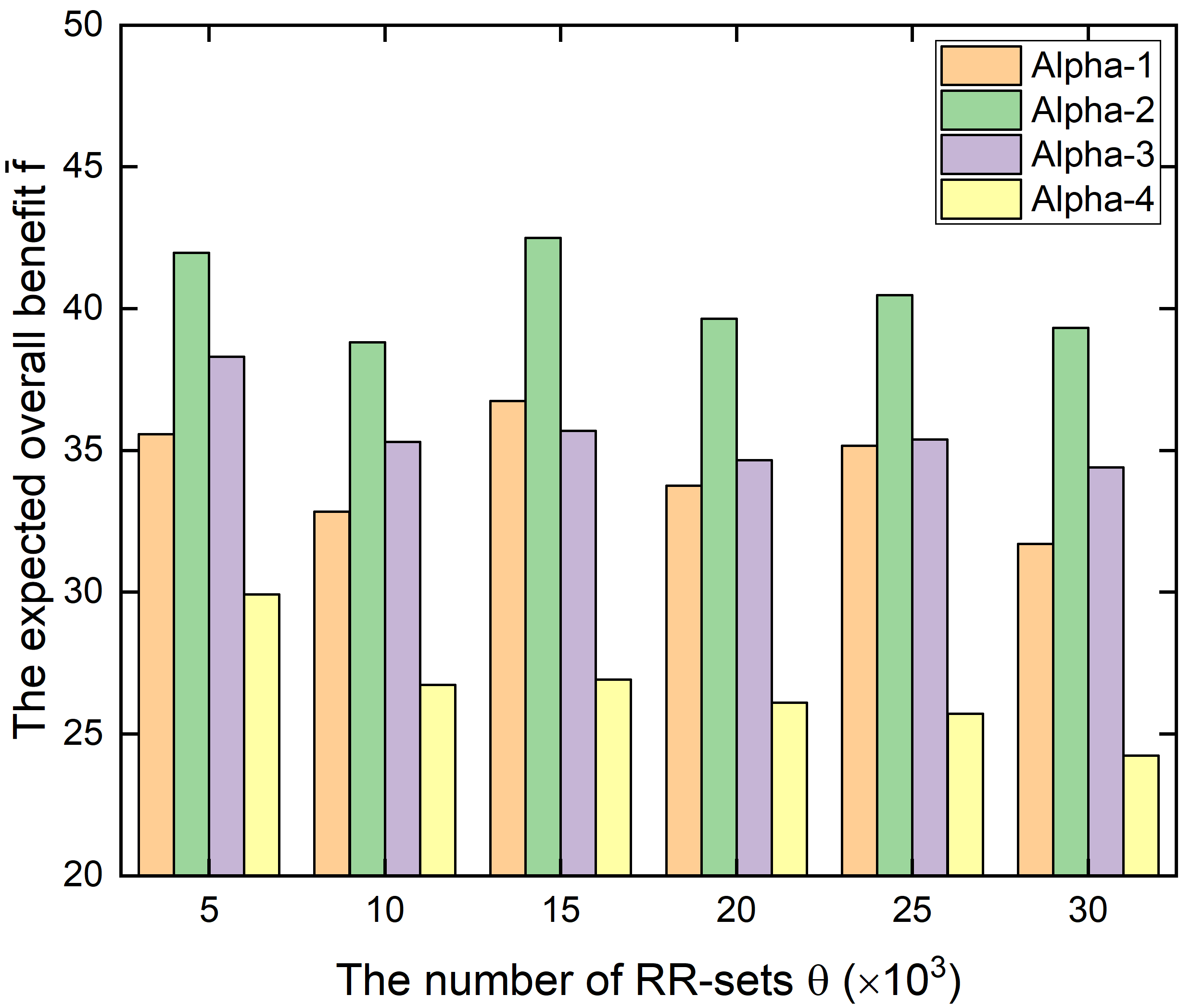}
		}%
		\subfigure[Netscie, modmod-2]{
			\includegraphics[width=0.24\linewidth]{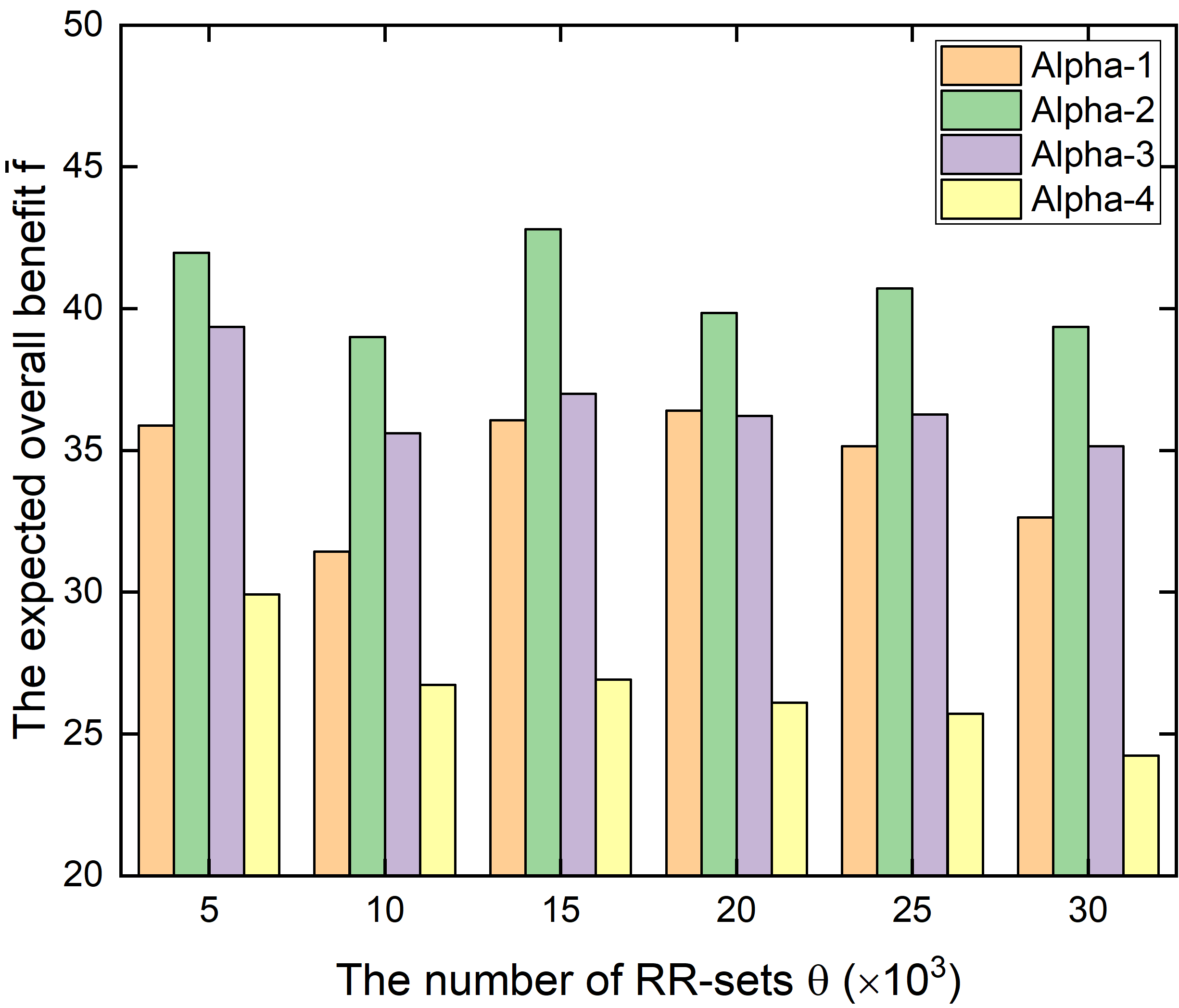}
		}%
		\subfigure[Wikivot, modmod-1]{
			\includegraphics[width=0.24\linewidth]{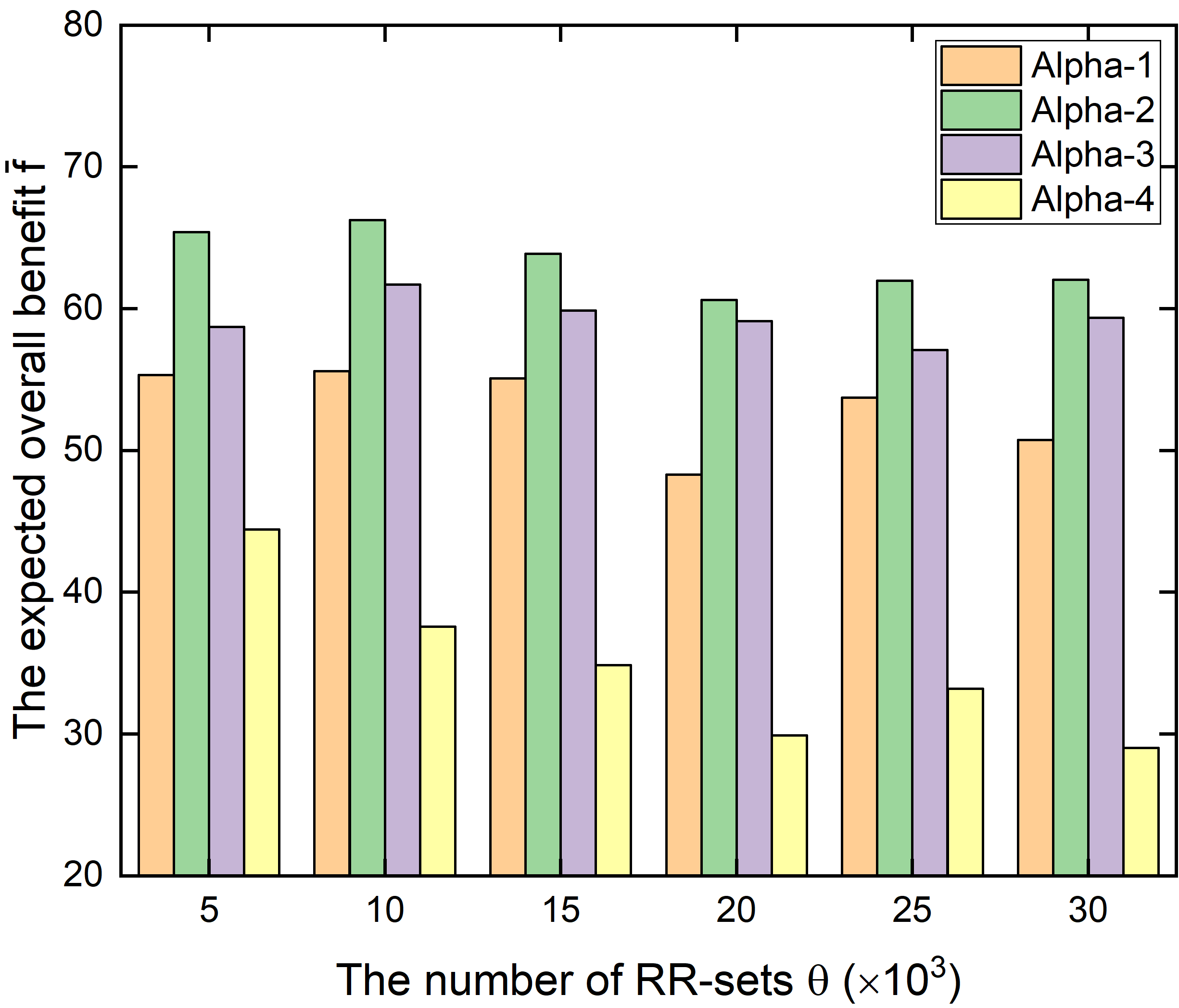}
		}%
		\subfigure[Wikivot, modmod-2]{
			\includegraphics[width=0.24\linewidth]{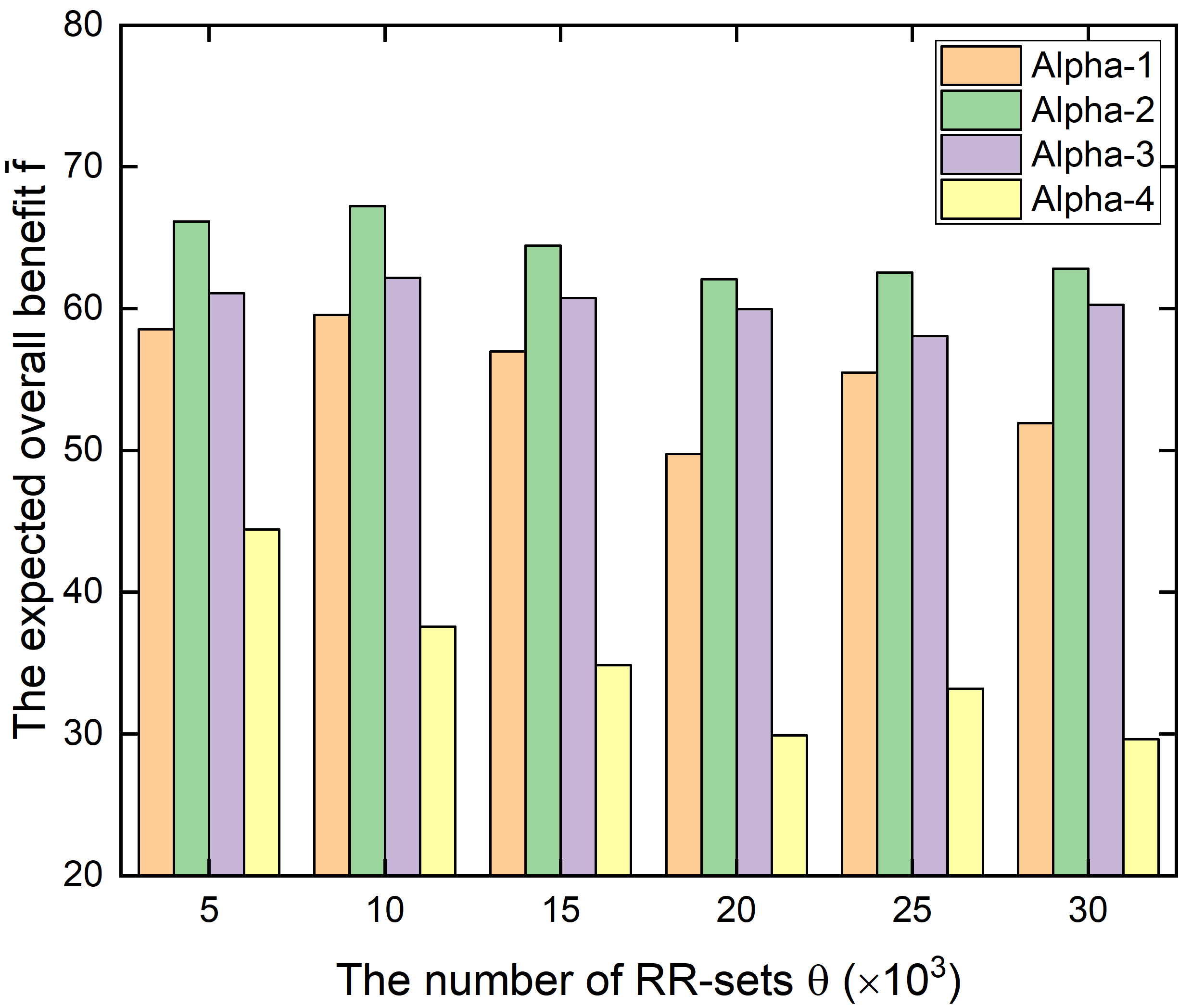}
		}%
		\centering
		\caption{The performance comparison of four permutation selections under the different datasets and upperbounds.}
		\label{fig3}
	\end{figure*}
	\begin{figure*}[!t]
		\centering
		\subfigure[Netscie, $\theta=5 K$]{
			\includegraphics[width=0.24\linewidth]{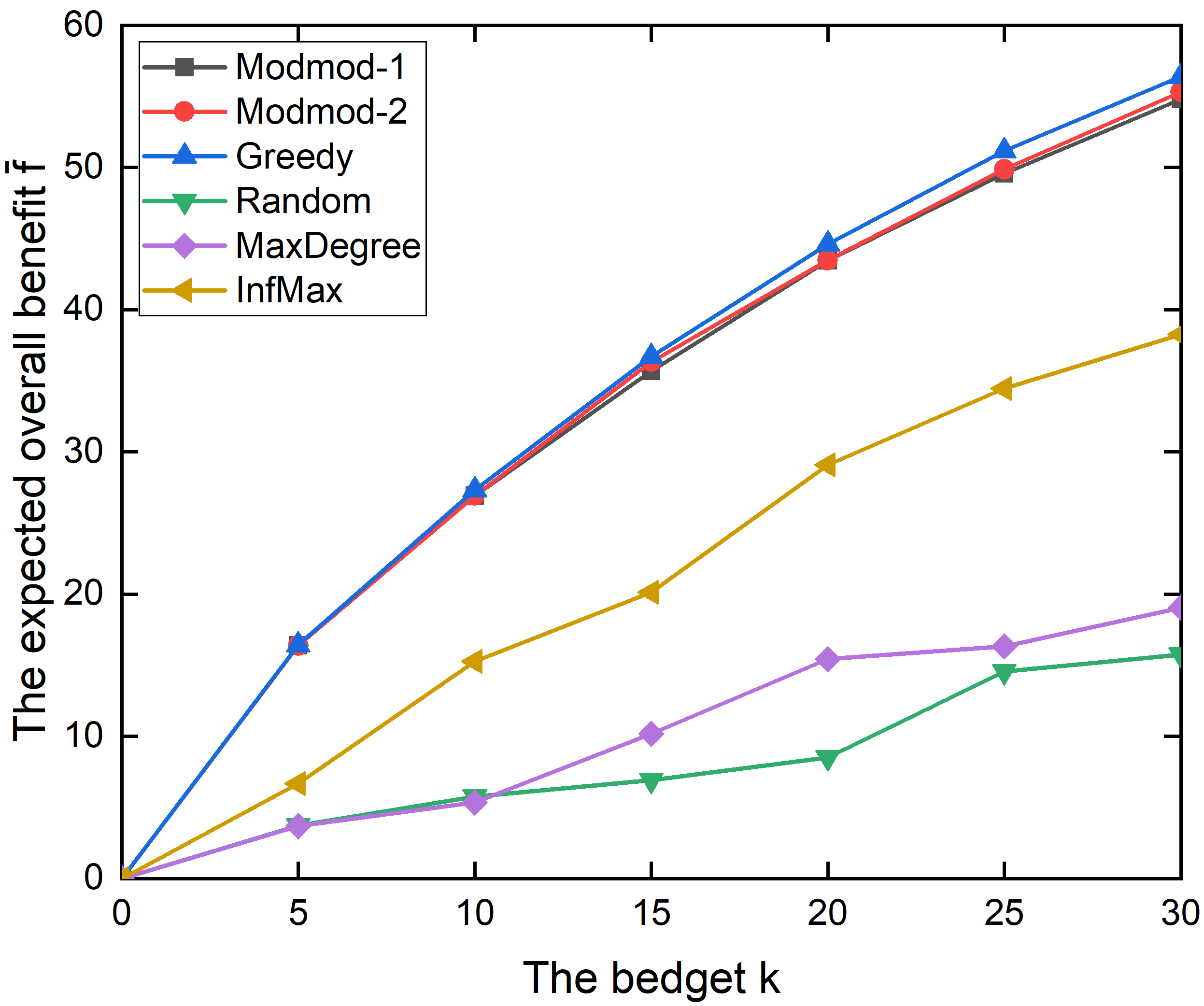}
		}%
		\subfigure[Netscie, $\theta=10 K$]{
			\includegraphics[width=0.24\linewidth]{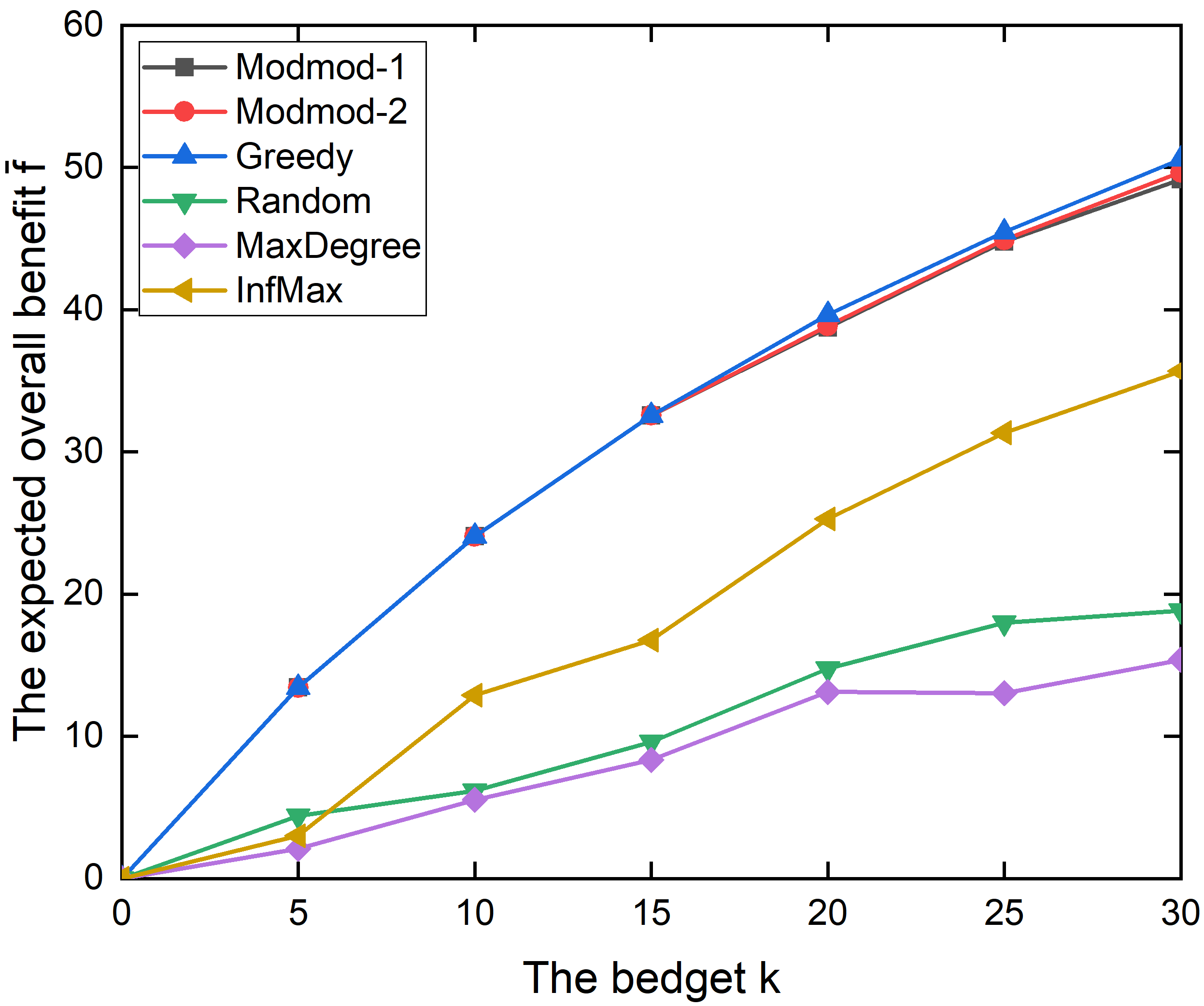}
		}%
		\subfigure[Netscie, $\theta=15 K$]{
			\includegraphics[width=0.24\linewidth]{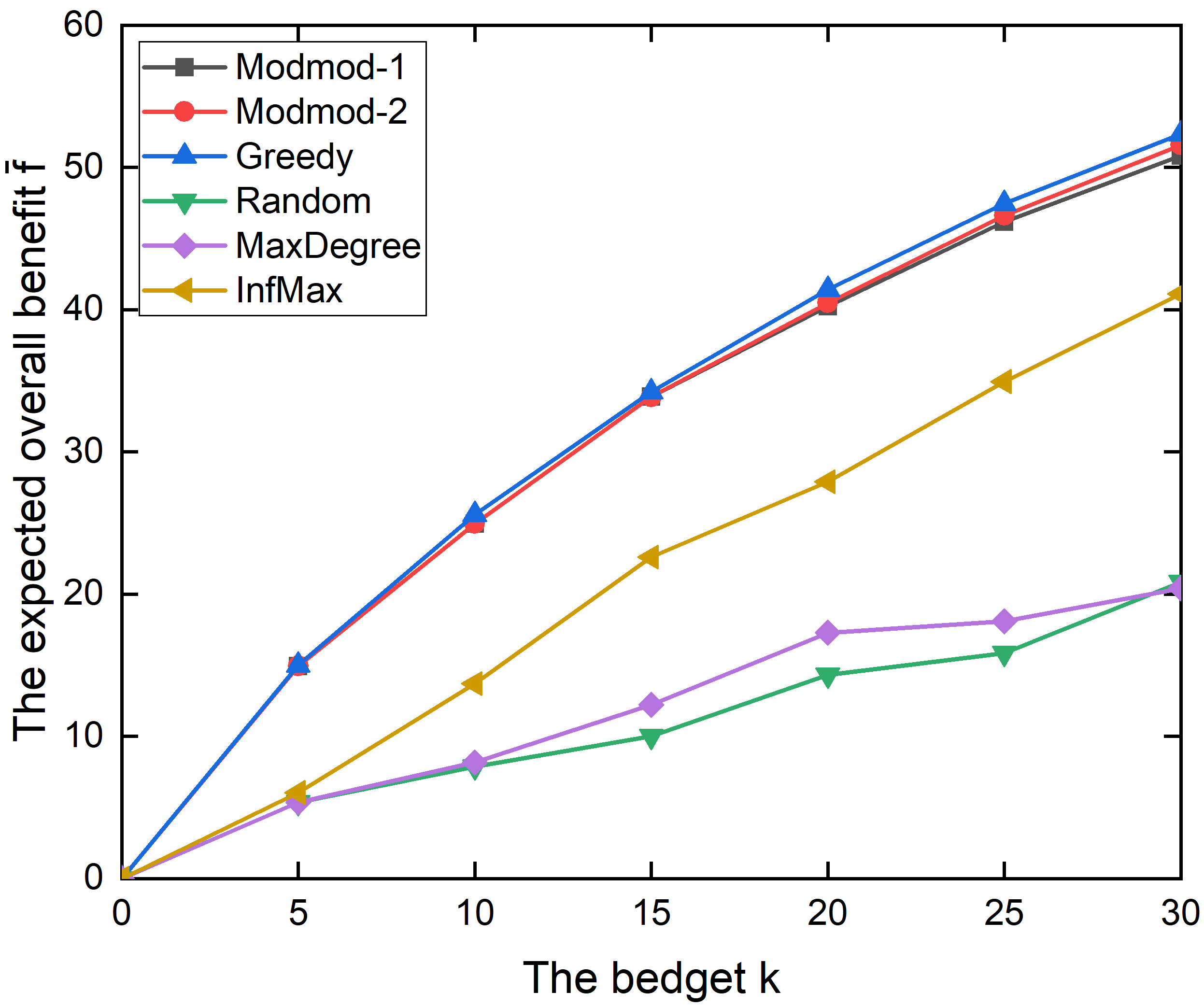}
		}%
		\subfigure[Netscie, $\theta=20 K$]{
			\includegraphics[width=0.24\linewidth]{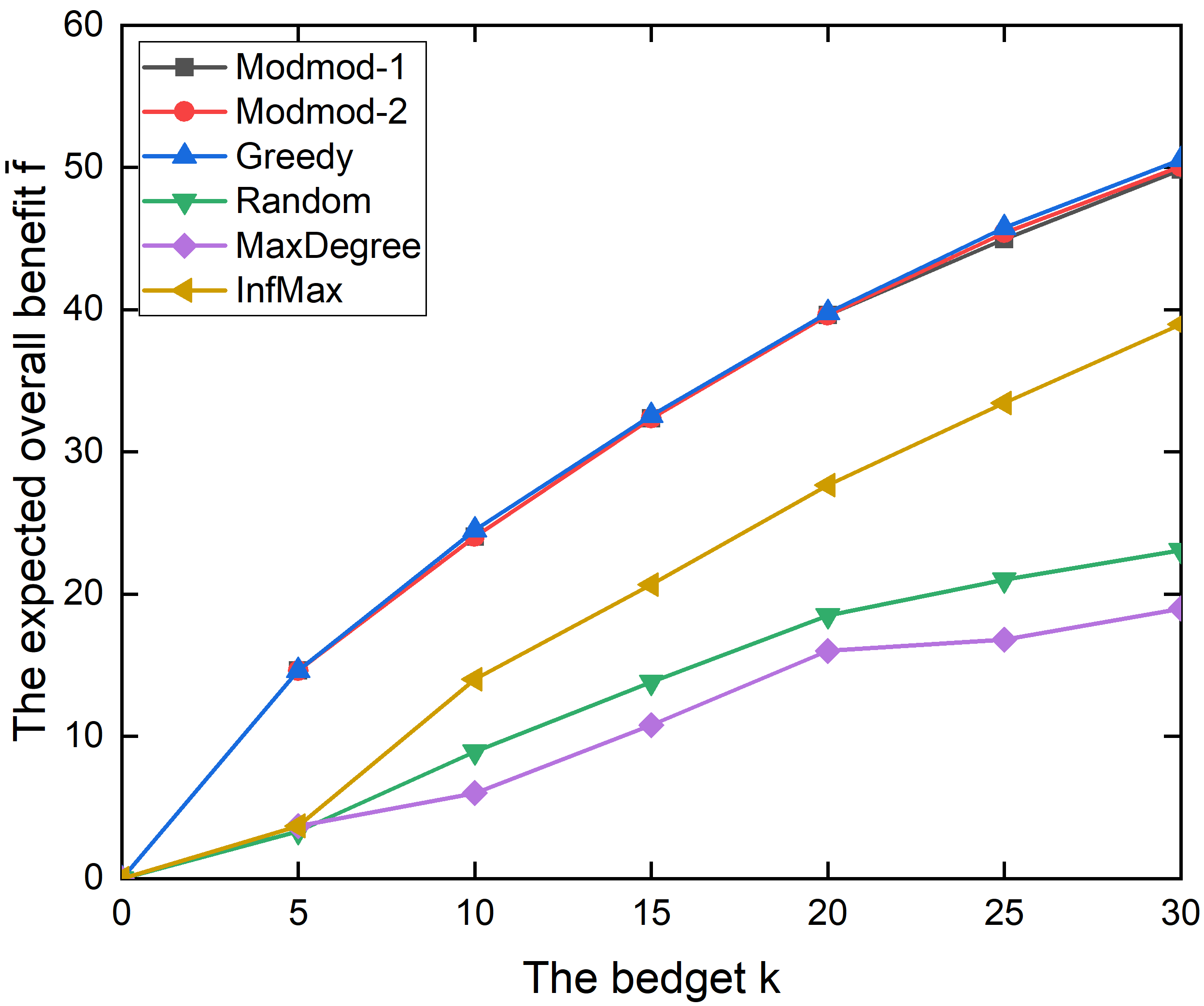}
		}%
		\centering
		\caption{The performance comparison with other heuristic algorithms under the Netscie dataset.}
		\label{fig4}
	\end{figure*}
	
	Consider the modular-modular procedure, we have to define a modular lower bound for the function $w(\cdot)$ and a modular upper bound for the function $z(\cdot)$. Here, we denote ``modmod-1'' to imply that we use the first upper bound $m^z_{X,1}(Y)$ defined in (8) and ``modmod-2'' to imply that we use the second upper bound $m^z_{X,2}(Y)$ defined in (9). Then, we need to compare our modular-modular procedure with other heuristic algorithms, especially for the greedy algorithm. The greedy algorithm is shown in Algorithm \ref{a3}, which selects the node with the maximum marginal expected overall benefit at each iteration until there is no positive marginal gain can be obtained. Other heuristic algorithms are shown as follows: (1) Random: it selects $k$ nodes uniformly from the node set; (2) MaxDegree: it selects $k$ nodes with the largest out-degree; and (3) InfMax: it is similar to the greedy algorithm, but substitutes the overall benefit $f(\cdot)$ with benefit $w(\cdot)$. They are all estimated on the same group of RR-sets, where the number of random RR-set $R_w$ and $R_z$ is denoted by $\theta=\lambda=\mu$.
	
	\begin{algorithm}[h]
		\caption{\text{Greedy}}\label{a3}
		\begin{algorithmic}[1]
			\renewcommand{\algorithmicrequire}{\textbf{Input:}}
			\REQUIRE A set function $f:2^V\rightarrow\mathbb{R}$
			\STATE Initialize: $S_p\leftarrow\emptyset$
			\FOR {$i=1$ to $k$}
			\STATE Select $u^*$ such that $u^*\in\arg\max_{u\in V\backslash S_p}f(u|S_p)$
			\IF {$f(u^*|S_p)<0$}
			\STATE Break
			\ENDIF
			\STATE $S_p\leftarrow S_p\cup\{u^*\}$
			\ENDFOR
			\RETURN $S_p$
		\end{algorithmic}
	\end{algorithm}

	To get a lower bound, the optimal permutation selections is very hard, thus we give several heuristic strategies to get that efficiently. For the permutation $\alpha^t$ that contains $X^t$ at each iteration, there are four heuristic selection strategies. They are (1) Alpha-1: rearrange $X^t$ and $V\backslash X^t$ randomly and respectively, and then concatenate them together as a $\alpha^t$; (2) Alpha-2: sort $X^t$ and $V\backslash X^t$ respectively from largest to smallest according to the expected overall benefit $f(u)$ for each $u\in V$, and then concatenate them together as a $\alpha^t$; (3) Alpha-3: sort $X^t$ and $V\backslash X^t$ respectively from largest to smallest according to the expected benefit $w(u)$ for each $u\in V$, and then concatenate them together as a $\alpha^t$; and (4) Alpha-4: sort $X^t$ and $V\backslash X^t$ respectively from smallest to largest according to the $z(u)$ for each $u\in V$, and then concatenate them together as a $\alpha^t$.
	
\subsection{Experimental Results}
	\textit{1) Permutation selections:} Fig. \ref{fig3} shows the performance comparison of modular-modular procedure under the aforementioned four permutation selections. Shown as Fig. \ref{fig3}, the solution achieved under the Alpha-2 that permutates according to the expected overall benefit has the best performance. Thus, in the follow-up experiments, we default that modular-modular procedure is implemented under the Alpha-2. The performance under the Alpha-3 is slightly worse that under the Alpha-2. The performance under the Alpha-4 is extremely worse, which implies this heuristic selection is invalid. Moreover, the random permutation selection Alpha-1 is unstable, which is sometimes good sometimes bad.
	
		\begin{figure*}[!t]
		\centering
		\subfigure[Wikivot, $\theta=5 K$]{
			\includegraphics[width=0.24\linewidth]{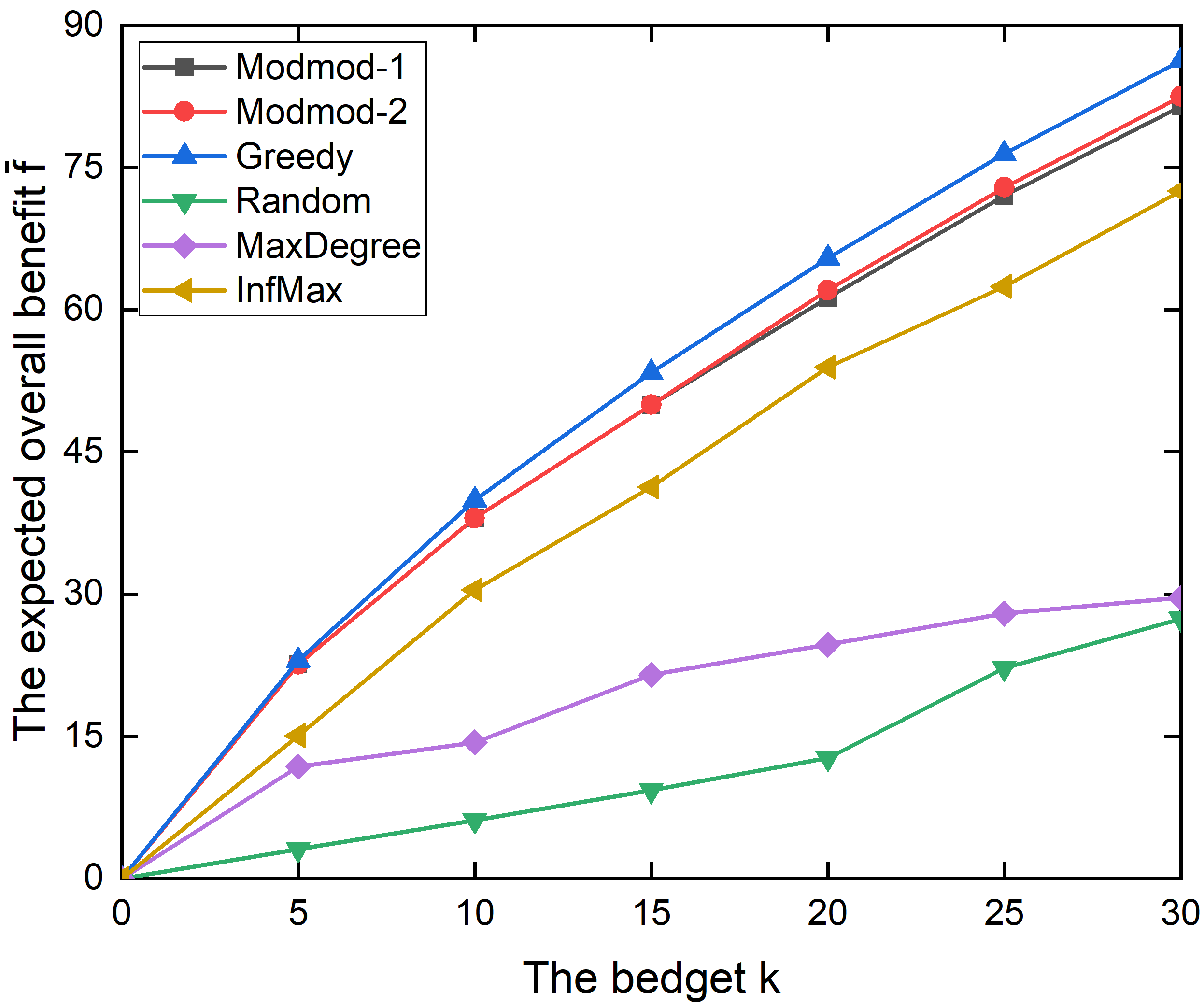}
		}%
		\subfigure[Wikivot, $\theta=10 K$]{
			\includegraphics[width=0.24\linewidth]{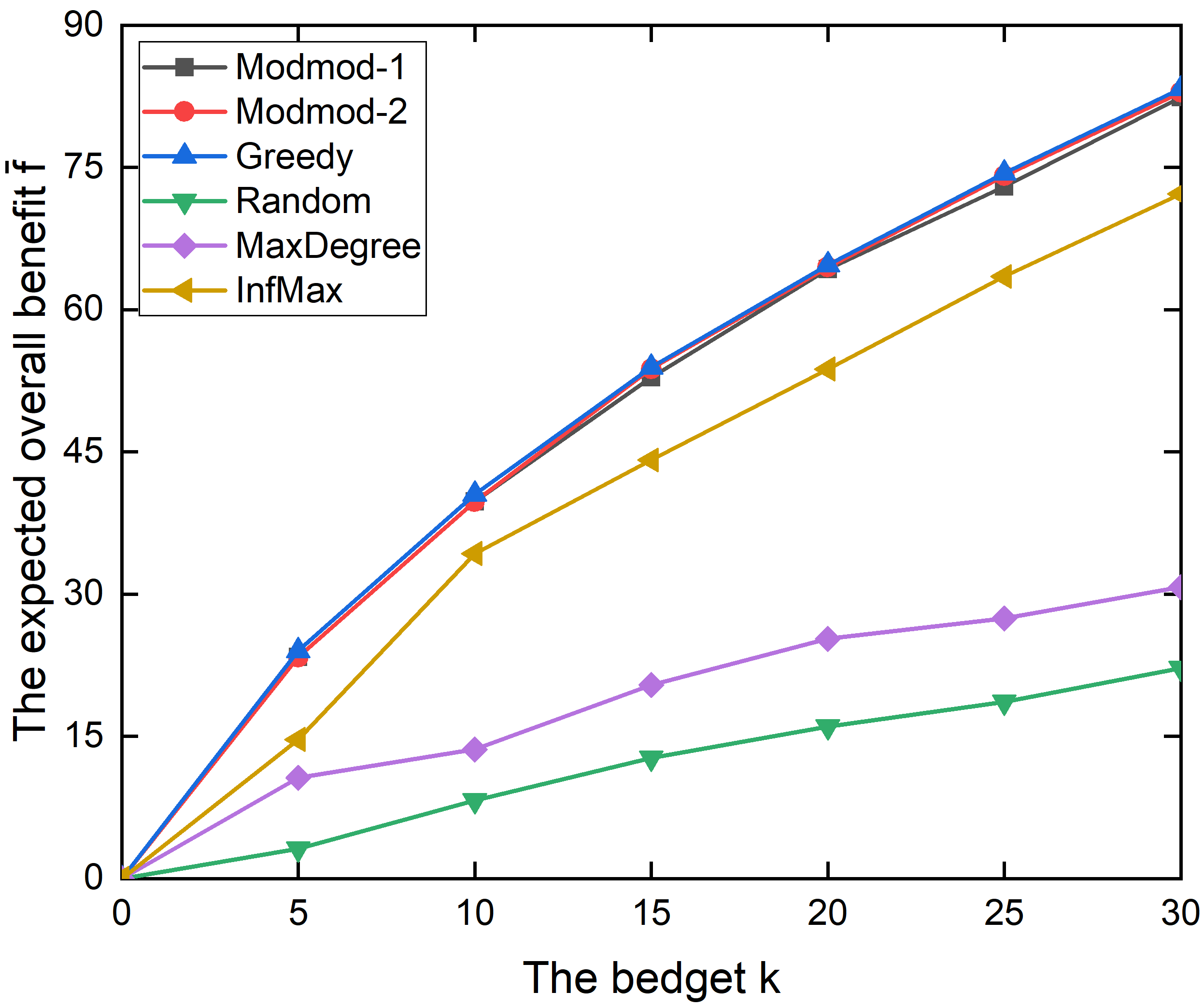}
		}%
		\subfigure[Wikivot, $\theta=15 K$]{
			\includegraphics[width=0.24\linewidth]{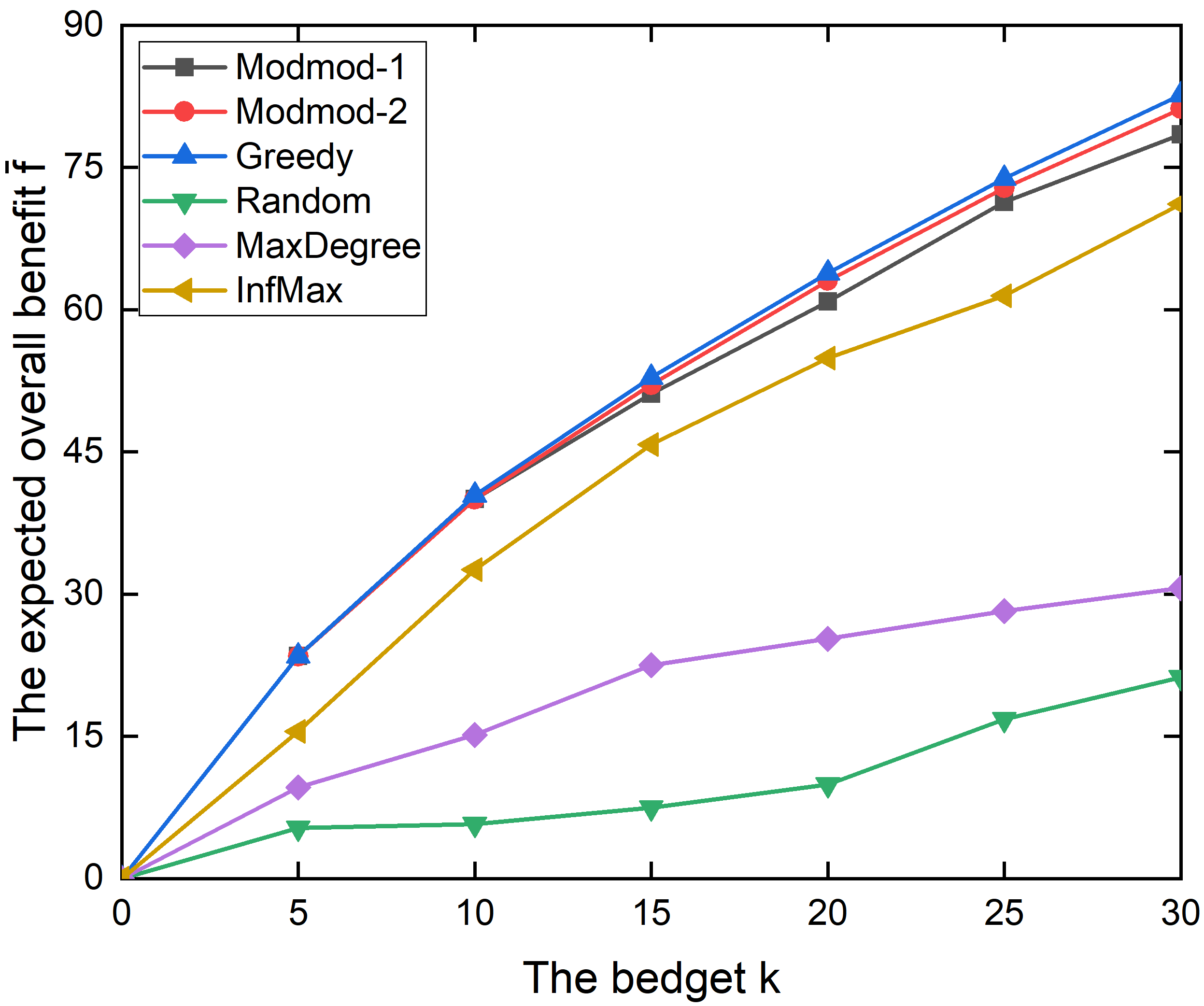}
		}%
		\subfigure[Wikivot, $\theta=20 K$]{
			\includegraphics[width=0.24\linewidth]{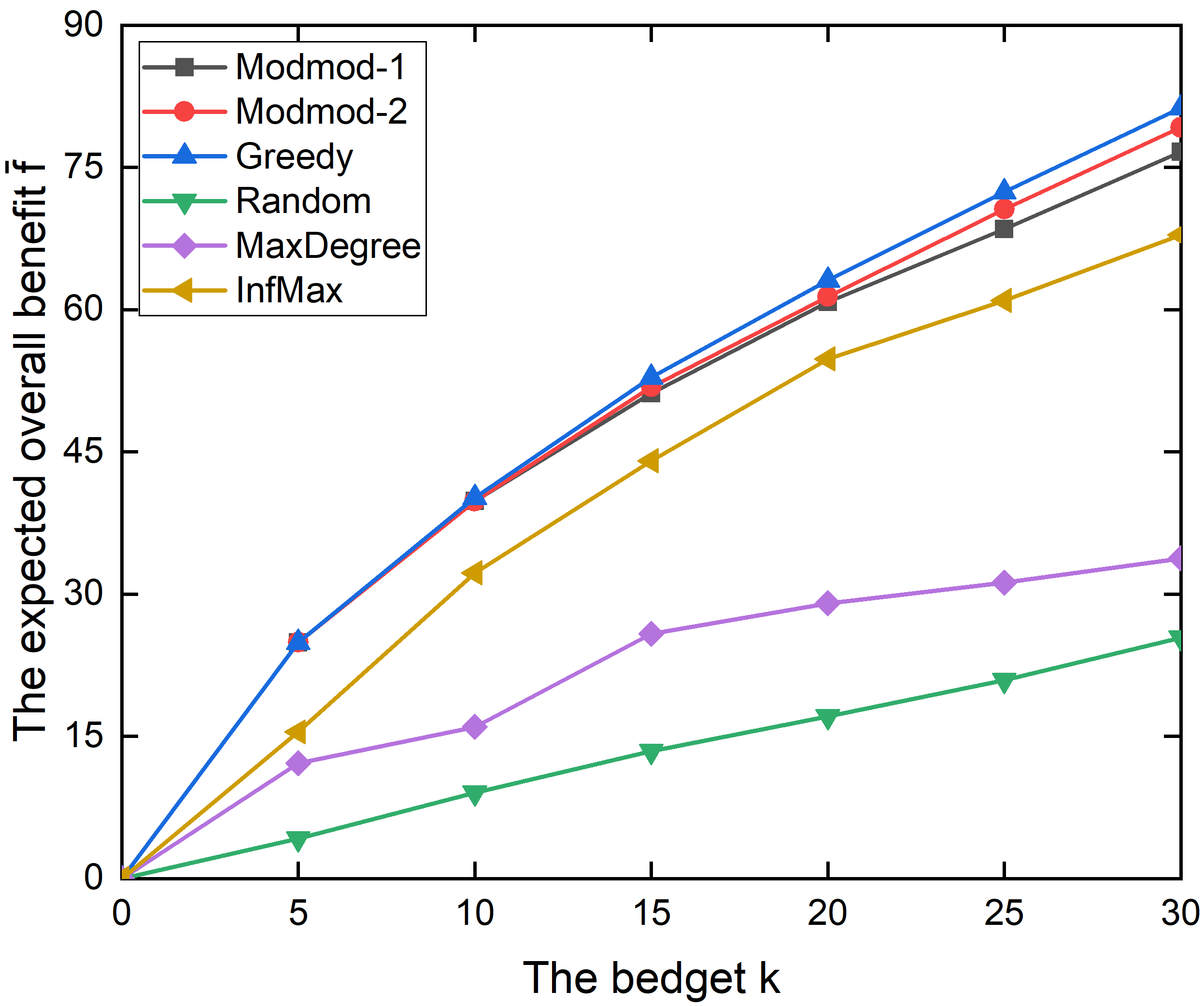}
		}%
		\centering
		\caption{The performance comparison with other heuristic algorithms under the Wikivot dataset.}
		\label{fig5}
	\end{figure*}
	\begin{figure*}[!t]
		\centering
		\subfigure[Bitcoin, $\theta=5 K$]{
			\includegraphics[width=0.24\linewidth]{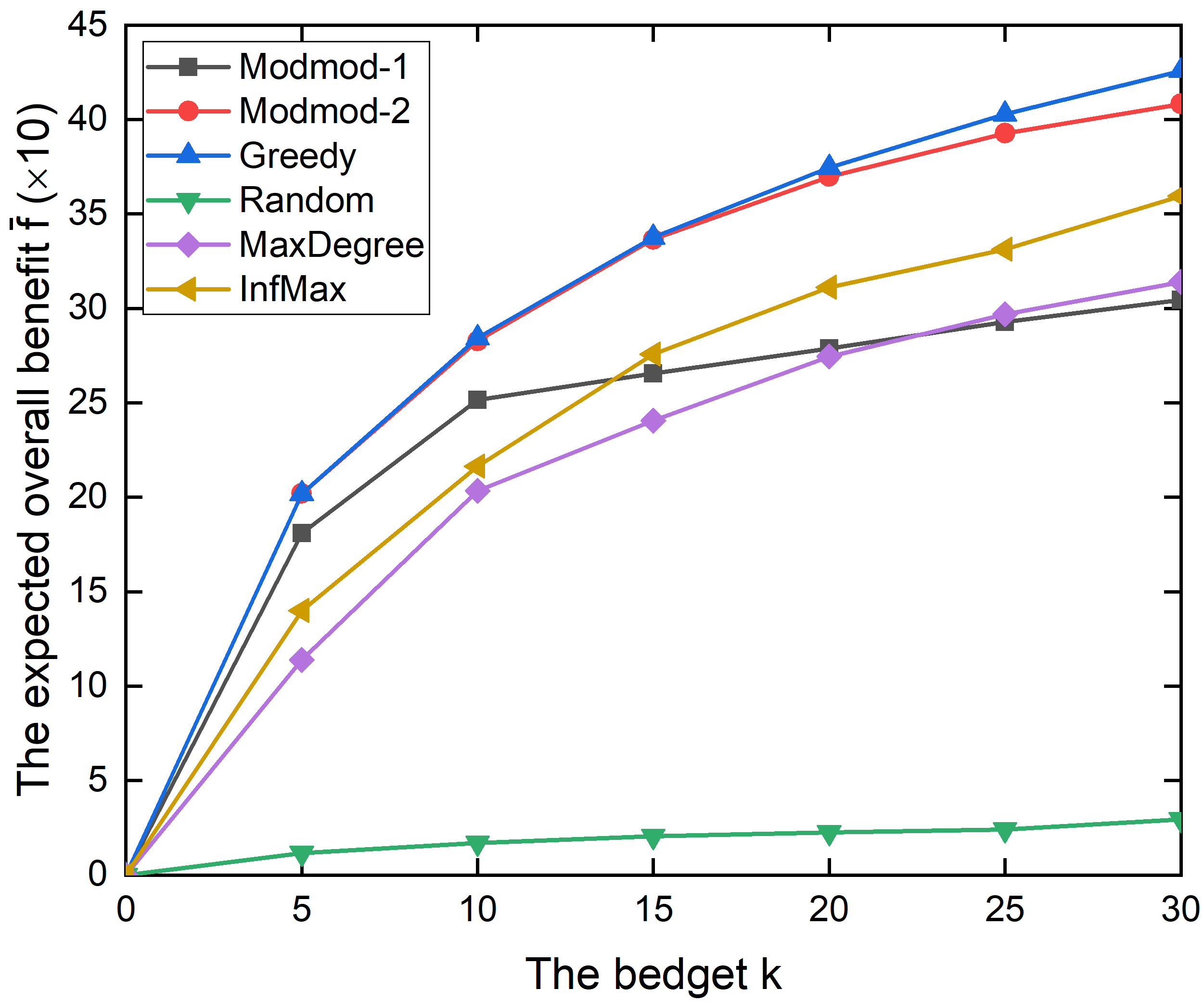}
		}%
		\subfigure[Bitcoin, $\theta=10 K$]{
			\includegraphics[width=0.24\linewidth]{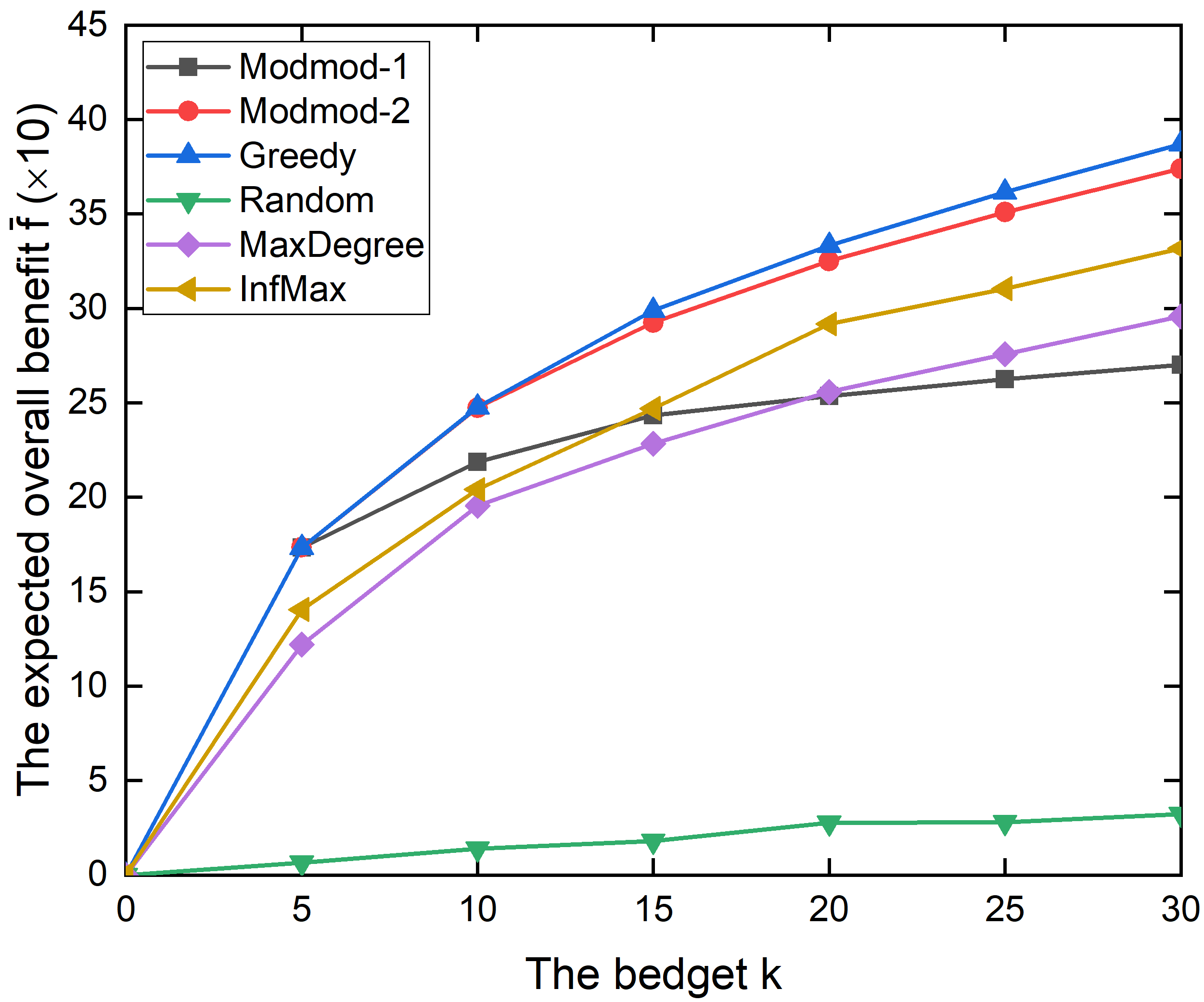}
		}%
		\subfigure[Bitcoin, $\theta=15 K$]{
			\includegraphics[width=0.24\linewidth]{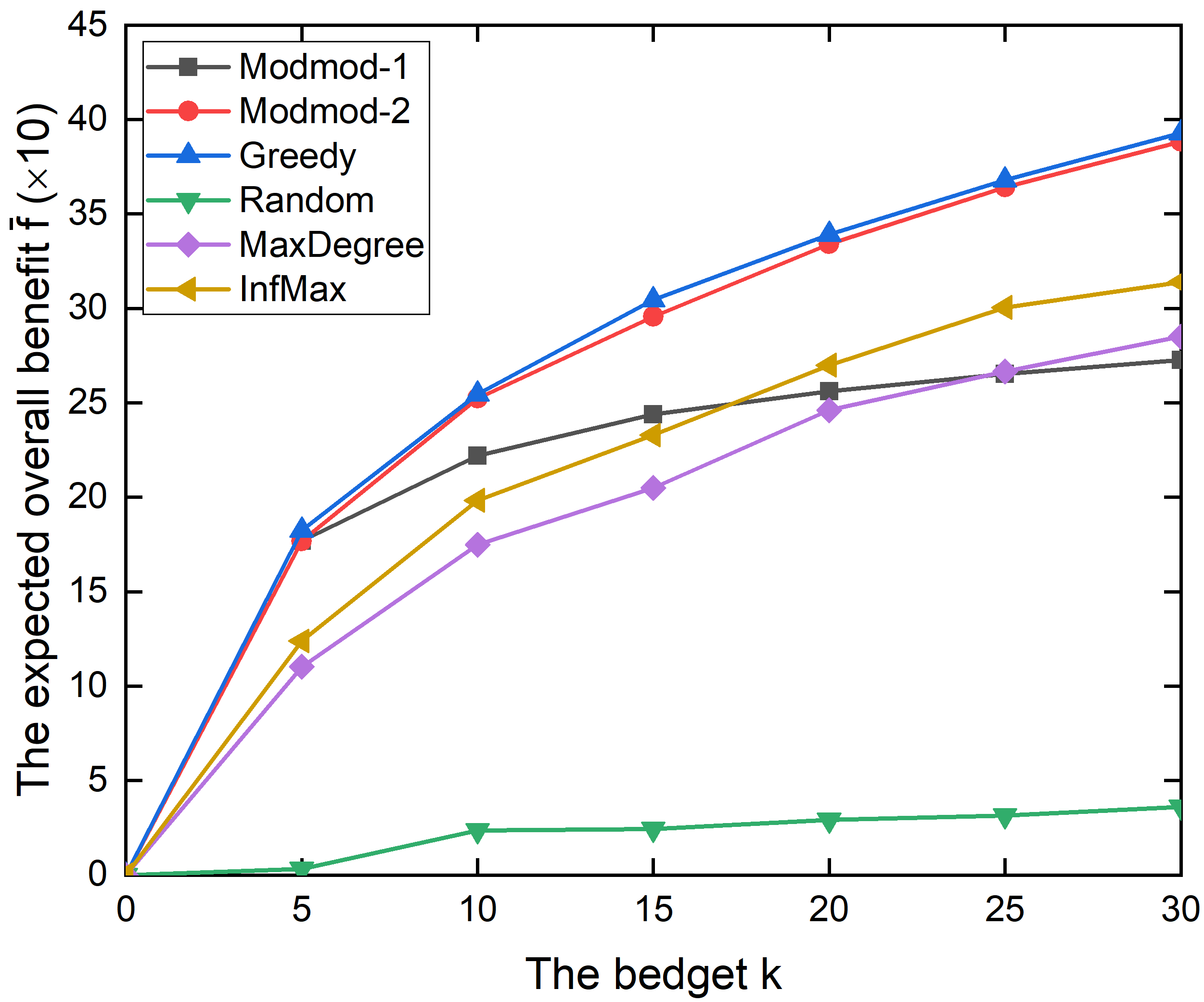}
		}%
		\subfigure[Bitcoin, $\theta=20 K$]{
			\includegraphics[width=0.24\linewidth]{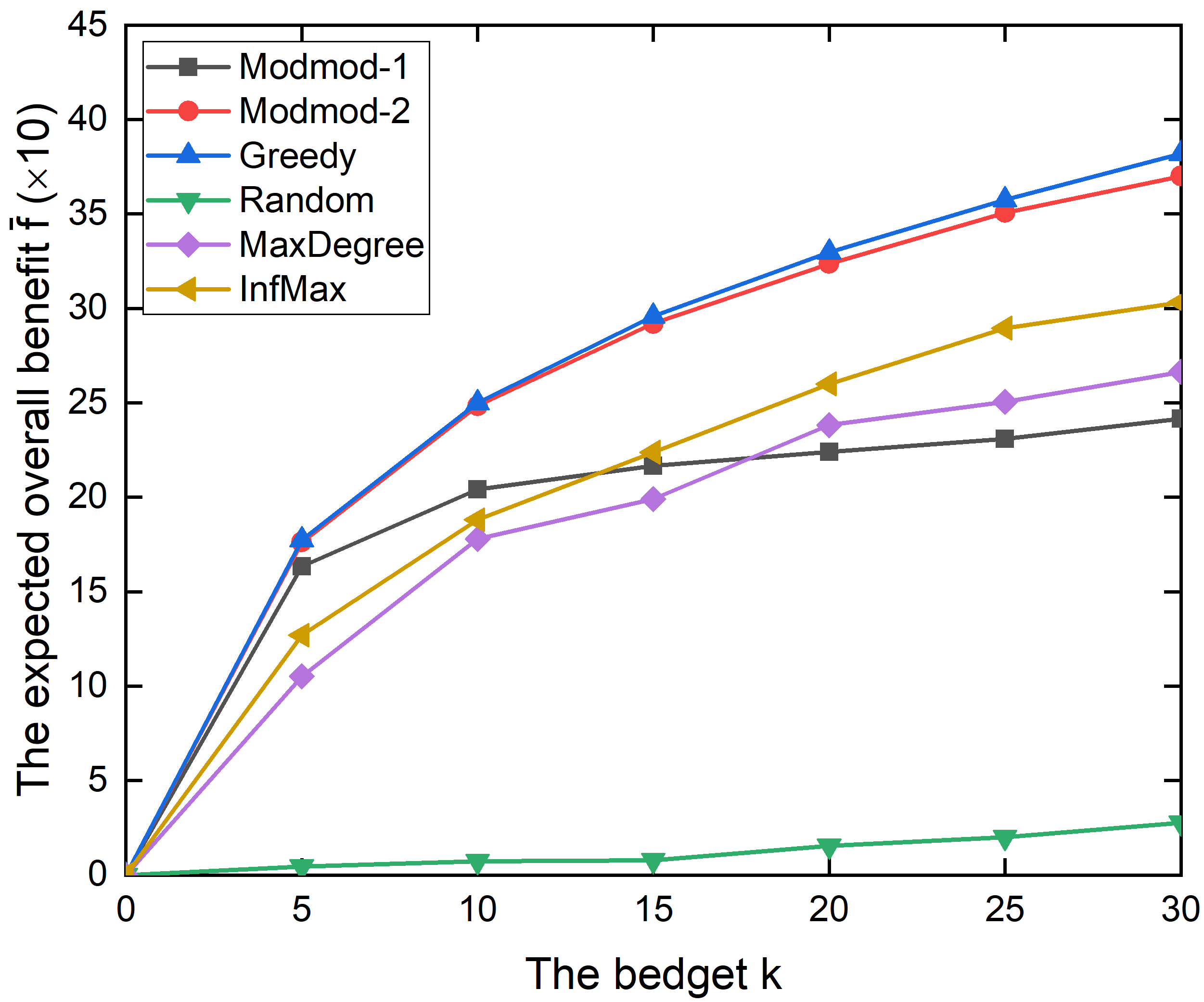}
		}%
		\centering
		\caption{The performance comparison with other heuristic algorithms under the Bitcoin dataset.}
		\label{fig6}
	\end{figure*}

	\begin{table}[!t]
		\renewcommand{\arraystretch}{1.3}
		\caption{Approximation of modular-modular proceduce when $k=20$}
		\label{table2}
		\centering
		\begin{tabular}{|c|c|c|c|c|c|c|}
			\hline
			\bfseries & \multicolumn{2}{|c|}{Netscie} & \multicolumn{2}{|c|}{Wikivot} & \multicolumn{2}{|c|}{Bitcoin} \\
			\hline
			\bfseries $\theta$ & md-1 & md-2 & md-1 & md-2 & md-1 & md-2 \\
			\hline
			5 K & 0.51 & 0.51 & 0.44 & 0.44 & 0.31 & 0.41 \\
			\hline
			10K & 0.50 & 0.50 & 0.47 & 0.47 & 0.31 & 0.42 \\
			\hline
			15K & 0.50 & 0.51 & 0.50 & 0.50 & 0.32 & 0.42 \\
			\hline
			20K & 0.52 & 0.53 & 0.51 & 0.51 & 0.32 & 0.45 \\
			\hline
		\end{tabular}
	\end{table}
	\begin{table}[!t]
		\renewcommand{\arraystretch}{1.3}
		\caption{Running time of modular-modular proceduce when $k=20$}
		\label{table3}
		\centering
		\begin{tabular}{|c|c|c|c|c|c|c|}
			\hline
			\bfseries & \multicolumn{2}{|c|}{Netscie} & \multicolumn{2}{|c|}{Wikivot} & \multicolumn{2}{|c|}{Bitcoin} \\
			\hline
			\bfseries $\theta$ & md-1 & md-2 & md-1 & md-2 & md-1 & md-2 \\
			\hline
			5 K & 09 & 28 & 24 & 083 & 255 & 0935 \\
			\hline
			10K & 17 & 53 & 44 & 154 & 232 & 1190 \\
			\hline
			15K & 23 & 64 & 65 & 410 & 445 & 2587 \\
			\hline
			20K & 27 & 57 & 82 & 285 & 535 & 2481 \\
			\hline
		\end{tabular}
	\end{table}

	\textit{2) Performance of different algorithms: } Fig. \ref{fig4}, Fig. \ref{fig5}, and Fig. \ref{fig6} show the performance comparison with other heuristic algorithms under the different datasets. In these figures, we test the algorithms under the different number of RR-sets. Obviously, the estimations will be more and more accurate as the number of RR-sets increases, but the gap looks inconspicuous from these figures. Then, we have several observations as follows. First, the expected overall benefit increases as the budget increases at least on a budget less than 30. Then, the performances achieved by greedy and modmod-2 algorithms are very close under all datasets. The performances achieved by modmod-1 are unstable under the different datasets, which has good results under the Netscie and Wikivot datasets but a bad result under the Bitcoin dataset. It implies that the selection of upper bound is a critical factor that affects the results of the modular-modular procedure.
	
\subsection{Approximation and Running Time: }
	The approximation and running time of modular-modular procedure when $k=20$ are shown in Table \ref{table2} and Table \ref{table3}. Here, we set the parameter $\delta=0.1$, which means that the approximation ratio shown as \ref{table2} can be satisfied with at least $0.9$ probability. From the Table \ref{table2}, we can see that the approximation ratio improves as the number of RR-sets increases since the estimation errors in (23) can be reduced. From the table \ref{table3}, the running time increases as the number of RR-sets increases generally because the modular maximization process shown as Algorithm \ref{a2} is more time-consuming. However, it is still uncertain since the number of iterations varies under different circumstances, where modmod-2 needs to update $X^t$ more times than modmod-1.

\section{Conclusions}
	In this paper, we consider the disturbance of rival's influence on our benefits we can get from the social networks and propose an OEBI problem formally, which is a generalization for a number of realistic scenarios. Then, we quantify this disturbance, define its objective function, and show its properties. To solve it, we decompose it into the difference of two submodular functions and apply modular-modular procedure to get a solution according to their lower bound and upper bound. Then, we design an efficient unbiased estimate to approximate it with a data-dependent approximation guarantee but reduce running time significantly. These results are verified by numerical simulations based on real-world datasets.

\section*{Acknowledgment}

This work is partly supported by National Science Foundation under grant 1747818 and 1907472.

\ifCLASSOPTIONcaptionsoff
  \newpage
\fi

\bibliographystyle{IEEEtran}
\bibliography{references}

\begin{IEEEbiography}[{\includegraphics[width=1in,height=1.25in,clip,keepaspectratio]{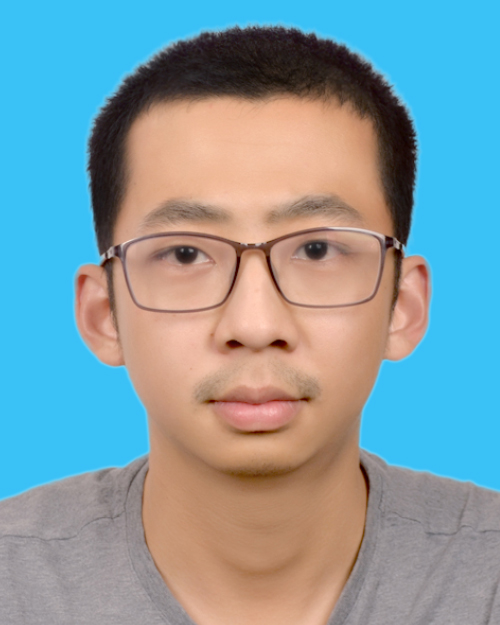}}]{Jianxiong Guo}
	is a Ph.D. candidate in the Department of Computer Science at the University of Texas at Dallas. He received his B.S. degree in Energy Engineering and Automation from South China University of Technology in 2015 and M.S. degree in Chemical Engineering from University of Pittsburgh in 2016. His research interests include social networks, data mining, IoT application, blockchain, and combinatorial optimization.
\end{IEEEbiography}

\begin{IEEEbiography}[{\includegraphics[width=1in,height=1.25in,clip,keepaspectratio]{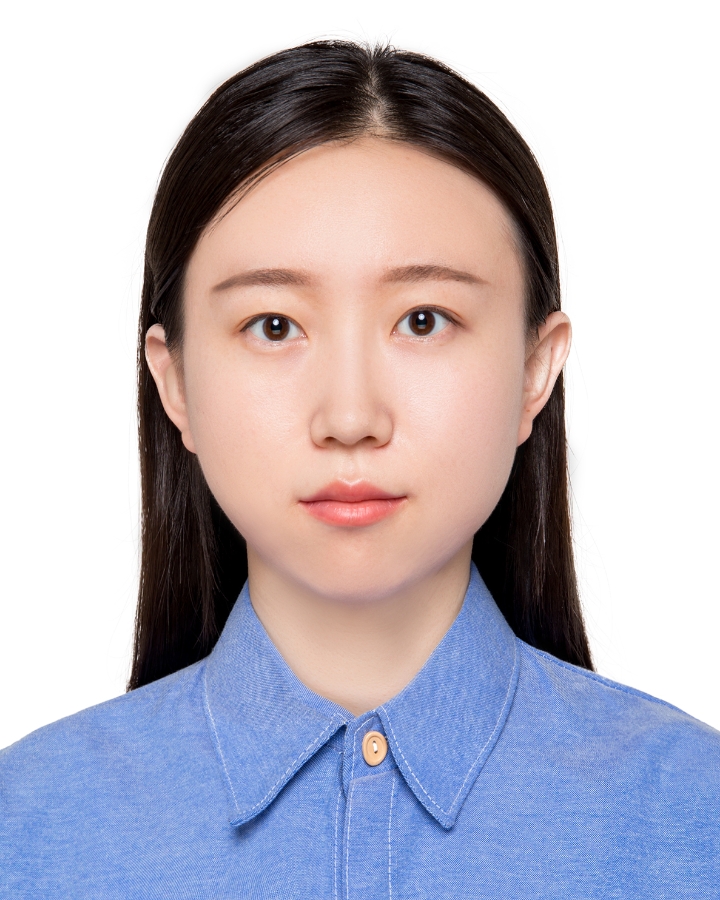}}]{Yapu Zhang}
	received the B.S. degree in Mathematics and Applied Mathematics from Northwest University, Xi'an, China, in 2016. She is a Ph.D. candidate in the School of Mathematical Sciences, University of Chinese Academy of Sciences, Beijing, China. Her research interests include social networks and approximation algorithms.
\end{IEEEbiography}

\begin{IEEEbiography}[{\includegraphics[width=1in,height=1.25in,clip,keepaspectratio]{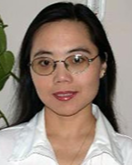}}]{Weili Wu}
	received the Ph.D. and M.S. degrees from the Department of Computer Science, University of Minnesota, Minneapolis, MN, USA, in 2002 and 1998, respectively. She is currently a Full Professor with the Department of Computer Science, The University of Texas at Dallas, Richardson, TX, USA. Her research mainly deals in the general research area of data communication and data management. Her research focuses on the design and analysis of algorithms for optimization problems that occur in wireless networking environments and various database systems.
\end{IEEEbiography}

\end{document}